\documentclass[twocolumn]{IEEEtran}

\usepackage{amsmath,epsfig,amssymb,verbatim,amsopn,cite}
\usepackage{amsthm}
\usepackage{balance}
\usepackage{multirow}
\usepackage[usenames,dvipsnames]{color}
\usepackage[all]{xy}  
\usepackage{url}
\usepackage{amsfonts}
\usepackage{amssymb}
\usepackage{epsfig}
\usepackage{epstopdf}
\usepackage{bm}
\usepackage{balance}
\usepackage{graphicx}
\usepackage{subcaption}
\usepackage{footnote}
\usepackage{cancel}
\usepackage{algorithm,algorithmic}
\usepackage{longtable}
\usepackage{array}

\usepackage{multicol} 
\usepackage{makecell} 
\usepackage[short]{optidef} 

\usepackage{tikz}
\newcommand{\tikzxmark}{%
\tikz[scale=0.15] {
    \draw[line width=0.6,line cap=round] (0,0) to [bend left=5] (1,1);
    \draw[line width=0.6,line cap=round] (0.1,0.85) to [bend right=3] (0.8,0.05);
}}

\usepackage[flushleft]{threeparttable}

\newtheorem{Remark}{Remark}

  {\proof}{\proofend}
\newtheorem{proposition}{Proposition}

\usepackage[margin=15.1mm,top=17.8mm,bottom=25.5mm]{geometry}

\newcommand{\qa}{\textbf{a}}

\newcommand{\qe}{\textbf{e}}

\newcommand{\qg}{{ \textbf{g} }}

\newcommand{\qn}{{\bf n}}

\newcommand{\qs}{{\bf s}}

\newcommand{\qw}{{\bf w}}
\newcommand{\qx}{{\bf x}}
\newcommand{\qy}{{ \textbf{y} }}
\newcommand{\qz}{{ \textbf{z} }}

\newcommand{\qA}{{\bf A}}
\newcommand{\qB}{{\bf B}}
\newcommand{\qC}{{\bf C}}
\newcommand{\qD}{{\bf D}}

\newcommand{\qF}{{\bf F}}
\newcommand{\qG}{{ \textbf{G} }}
\newcommand{\qH}{{ \textbf{H} }}
\newcommand{\qI}{{ \textbf{I} }}

\newcommand{\qN}{{\bf N}}

\newcommand{\qS}{{\bf S}}

\newcommand{\qW}{{\bf W}}

\newcommand{\qY}{{\bf Y}}
\newcommand{\qZ}{{ \textbf{Z} }}

\newcommand{\ettall}{\emph{et al.}}

\newcommand{\UE}{\mathtt{I}}
\newcommand{\sn}{\mathtt{E}}

\DeclareMathOperator{\ETAI}{\boldsymbol{\eta}^{\mathtt{I}}}
\DeclareMathOperator{\ETAE}{\boldsymbol{\eta}^{\mathtt{E}}}

\DeclareMathOperator{\K}{\mathcal{K}}

\DeclareMathOperator{\M}{\mathcal{M}}

\DeclareMathOperator{\vecOp}{\mathrm{vec}}

\DeclareMathOperator{\C}{\mathbb{C}}
\DeclareMathOperator{\R}{\mathbb{R}}

\DeclareMathOperator{\CN}{\mathcal{CN}}

\newcommand{\PZF}{\mathsf{ZF}}
\newcommand{\PMRT}{\mathsf{PMRT}}

\newcommand{\wimk}{\qw_{\mathtt{I},mk_{i}}}

\newcommand{\wemj}{\qw_{\mathtt{E},m k_{e}}}
\newcommand{\wemjp}{\qw_{\mathtt{E},m k'_{e}}}

\newcommand{\wimkp}{\qw_{\mathtt{I},mk_{i}'}}

\newcommand{\Ghms}{\hat{\qG}_m^{\sn}}

\newcommand{\Ghmu}{\hat{\qG}_m^{\UE}}

\newcommand{\hatWi}{\hat{\qW}^{\mathtt{I}, \circ}_{m}}
\newcommand{\hatWe}{\hat{\qW}^{\mathtt{E}, \circ}_{m}}

\newcommand{\Snn}{\sigma_n^2}
\newcommand{\Ex}{\mathbb{E}}

\newcommand{\yej}{y_{\mathtt{E}, k_e}}

\newcommand{\yik}{y_{\mathtt{I},k_i}}

\newcommand{\gmkiu}{\qg_{mk_{i}}^{\UE}}

\newcommand{\gmKiuH}{\qg_{mK}^{\UE H}}
\newcommand{\gmjue}{\qg_{mk_{e}}^{\sn}}

\newcommand{\hgmjue}{\hat{\qg}_{m k_e}^{\sn}}

\newcommand{\tildehgmki}{\tilde{\hat{\qg}}_{m k_i}^{\UE}}

\newcommand{\hgmjueH}{(\hat{\qg}_{m k_e}^{\sn})^\dag}

\newcommand{\FmRIS}{\qF_{m}}

\newcommand{\gejue}{\qg_{m k_{e}}^{\sn}}
\newcommand{\gejueH}{\left(\qg_{m k_{e}}^{\sn} \right)^{\!\dag}}

\newcommand{\trace}{\mathrm{tr}}
\newcommand{\diag}{\mathrm{diag}}

\newcommand{\hgmkue}{\hat{\qg}_{mk}^{\UE}}

\newcommand{\gtilmjeu}{\tilde{\qg}_{m k_e }^{\sn}}
\newcommand{\gtilmjeuH}{(\tilde{\qg}_{m k_e}^{\sn})^{\dag}}

\newcommand{\gameumj}{\gamma_{m k_e}^{\sn}}
\newcommand{\gameumjp}{\gamma_{m k'_e}^{\sn}}

\newcommand{\snrul}{\rho_{u}}
\newcommand{\snrdl}{\rho_{d}}

\newcommand{\zmk}{\qz^{\mathtt{i}}_{mk}}
\newcommand{\zmklos}{\bar{\qz}_{mk}^{\mathtt{i}}}
\newcommand{\zmkilos}{{\qz}_{m k_i}^{\mathtt{I}} }
\newcommand{\zmkiplos}{{\qz}_{m k'_i}^{\mathtt{I}} }
\newcommand{\zmkelos}{\bar{\qz}_{m k_e}^{\mathtt{E}} }
\newcommand{\zmkeplos}{\bar{\qz}_{m k'_e}^{\mathtt{E}} }

\newcommand{\zmknlos}{\tilde{\qz}_{mk}^{\mathtt{i}}}
\newcommand{\zmkpnlos}{\tilde{\qz}_{mk'}^{\mathtt{i}}}

\newcommand{\gmk}{\qg_{mk}^{\mathtt{i}}}

\newcommand{\gmkbar}{\bar{\qg}_{mk}^{\mathtt{i}}}
\newcommand{\gmkebar}{\bar{\qg}_{m k_e}^{\mathtt{E}}}
\newcommand{\gmkepbar}{\bar{\qg}_{m k'_e}^{\mathtt{E}}}

\newcommand{\gmkpbar}{\bar{\qg}_{mk'}^{\mathtt{i}}}
\newcommand{\gmkibar}{\bar{\qg}_{m k_i}^{\mathtt{I}}}

\newcommand{\hgmk}{\hat{\qg}_{mk}^{\mathtt{i}}}
\newcommand{\hgmki}{\hat{\qg}_{m k_i}^{\mathtt{I}}}

\newcommand{\tilgmk}{\tilde{\qg}_{mk}^{\mathtt{i}}}
\newcommand{\tilgmki}{\tilde{\qg}_{m k_i}^{\mathtt{I}}}

\newcommand{\betamk}{\beta_{mk}^{\mathtt{i}}}
\newcommand{\barbetamk}{\bar{\beta}^{\mathtt{i}}_{mk}}
\newcommand{\barbetamkp}{\bar{\beta}^{\mathtt{i}}_{mk'}}
\newcommand{\barbetamki}{\bar{\beta}^{\mathtt{I}}_{m k_i}}

\newcommand{\barbetamke}{\bar{\beta}^{\mathtt{E}}_{m k_e}}
\newcommand{\barbetamkep}{\bar{\beta}^{\mathtt{E}}_{m k'_e}}

\newcommand{\ymk}{\qy_{mk}^{\mathtt{i}}}

\newcommand{\DSki}{\mathrm{DS}_{k_i}}
\newcommand{\BUki}{\mathrm{BU}_{k_i}}
\newcommand{\IUIki}{\mathrm{IUI}_{k_{i} {k_{i}}'}}
\newcommand{\EUIki}{\mathrm{EUI}_{k_{i} k_{e}}}

\newcommand{\SEth}{\mathcal{S}_{k_i}}

\newcommand{\Covhatgmk}{\boldsymbol{\Sigma}_{\hgmk}}
\newcommand{\Covhatgmki}{\boldsymbol{\Sigma}_{\hat{\qg}_{m k_i}^{\mathtt{I}}}}
\newcommand{\Covhatgmke}{\boldsymbol{\Sigma}_{\hat{\qg}_{m k_e}^{\mathtt{E}}}}

\newcommand{\SINRki}{\mathrm{SINR}_{k_i}}

\newcommand{\Kii}{\mathcal{K}_{\mathtt{I}}}
\newcommand{\Kee}{\mathcal{K}_{\mathtt{E}}}
\newcommand{\varrhomkI}{\varrho_{m k_{i}}^{\mathtt{I}}}
\newcommand{\varrhomkpI}{\varrho_{m k'_{i}}^{\mathtt{I}}}

\newcommand{\varrhomjE}{\varrho_{m k_{e}}^{\mathtt{E}}}
\newcommand{\varrhomjpE}{\varrho_{m k'_{e}}^{\mathtt{E}}}

\newcommand{\etamkI}{\eta_{m k_{i}}^{\mathtt{I}}}
\newcommand{\etamkIn}{\eta_{m k_{i}}^{\mathtt{I}(n)}}

\newcommand{\etamkpI}{\eta_{m k_{i}'}^{\mathtt{I}}}
\newcommand{\etamjE}{\eta_{m k_{e}}^{\mathtt{E}}}

\newcommand{\etamjpE}{\eta_{m k_{e}'}^{\mathtt{E}}}

\newcommand{\etamI}{\eta_{m}^{\mathtt{I}}}
\newcommand{\etamE}{\eta_{m}^{\mathtt{E}}}

\newcommand{\SEk}{\mathrm{SE}_{k_i}}

\newcommand{\xik}{x_{\mathtt{I},k_{i}}}
\newcommand{\xikp}{x_{\mathtt{I},k'_i}}
\newcommand{\xej}{x_{\mathtt{E},k_{e}}}
\newcommand{\xejp}{x_{\mathtt{E},k'_e}}

\newcommand{\alphaPZFmki}{\alpha_{m k_i}^{\PZF}}
\newcommand{\alphaPZFmkip}{\alpha_{m k'_i}^{\PZF}}

\newcommand{\alphaPMRTmke}{\alpha_{m k_e}^{\PMRT}}

\newcommand{\rawlocalactionm}{\boldsymbol{\breve{\Pi}}_{m}}
\newcommand{\localactionm}{\boldsymbol{\Pi}_{m}}

\newcommand{\localstatem}{\boldsymbol{\vartheta}_{m}}

\DeclareMathOperator{\PHI}{\boldsymbol{\Phi}}

\DeclareMathOperator{\VARPHI}{\boldsymbol{\varphi}}

\DeclareMathOperator{\BETA}{\boldsymbol{\beta}}

\begin{document}

\title{Cell-Free Massive MIMO-Assisted SWIPT Using Stacked Intelligent Metasurfaces}

\author{Thien Duc Hua,~\IEEEmembership{Graduate Student Member,~IEEE,}, Mohammadali Mohammadi,~\IEEEmembership{Senior Member,~IEEE,}
\\
Hien Quoc Ngo,~\IEEEmembership{Fellow,~IEEE,}   and  Michail Matthaiou,~\IEEEmembership{Fellow,~IEEE}

\thanks{This work was supported by the U.K. Engineering and Physical Sciences Research Council (EPSRC) grant (EP/X04047X/2) for TITAN Telecoms Hub. The work of T. D. Hua and M. Matthaiou was supported by the European Research Council (ERC) under the European Union’s Horizon 2020 Research and Innovation Programme (grant agreement No. 101001331).}

\thanks{The authors are with the Centre for Wireless Innovation (CWI), Queen's University Belfast, BT3 9DT Belfast, U.K.
(email:\{dhua01, m.mohammadi, hien.ngo, m.matthaiou\}@qub.ac.uk).}
\thanks{ Parts of this paper appeared at the 2025 IEEE ICC~\cite{hua:icc:2025}.
}}

\markboth{.}%
{Shell \MakeLowercase{\textit{et al.}}: A Sample Article Using IEEEtran.cls for IEEE Journals}


\maketitle

\begin{abstract}
This study explores a next-generation multiple access (NGMA) framework for cell-free massive MIMO (CF-mMIMO) systems enhanced by stacked intelligent metasurfaces (SIMs), aiming to improve simultaneous wireless information and power transfer (SWIPT) performance. A fundamental challenge lies in optimally selecting the operating modes of access points (APs) to jointly maximize the received energy and satisfy spectral efficiency (SE) quality-of-service constraints. Practical system impairments, including a non-linear harvested energy model, pilot contamination (PC), channel estimation errors, and reliance on long-term statistical channel state information (CSI), are considered. We derive closed-form expressions for both the achievable SE and the average sum harvested energy (sum-HE). A mixed-integer non-convex optimization problem is formulated to jointly optimize the SIM phase shifts, APs mode selection, and power allocation to maximize average sum-HE under SE and average harvested energy constraints. To solve this problem, we propose a centralized training, decentralized execution (CTDE) framework based on deep reinforcement learning (DRL), which efficiently handles high-dimensional decision spaces. A Markovian environment and a normalized joint reward function are introduced to enhance the training stability across on-policy and off-policy DRL algorithms. Additionally, we provide a two-phase convex-based solution as a theoretical robust performance. Numerical results demonstrate that the proposed DRL-based CTDE framework achieves SWIPT performance comparable to convexification-based solution, while significantly outperforming baselines.

\end{abstract}
\begin{IEEEkeywords}
Cell-free massive multiple-input multiple-output (CF-mMIMO), simultaneous wireless information and power transfer (SWIPT), stacked intelligent metasurfaces (SIMs).
\end{IEEEkeywords}

\vspace{-1em}
\section{Introduction}
The surging demand for reliable wireless connectivity and efficient energy harvesting (HE) in Internet of Thing devices has made simultaneous wireless information and power transfer (SWIPT) a key research focus. Subsequently, numerous studies have underscored its integration with cell-free massive multiple input multiple output (CF-mMIMO) that could significantly improve both the both HE and spectral efficiency (SE)~\cite{ali:twc:11004488, 10685472}. According to~\cite{2025:Hien:invited, tsp:ris}, while advanced mMIMO is effective at enhancing network capacity, it is often unsustainable due to their excessive energy consumption. Furthermore, the implementation of both massive MIMO and millimeter-wave communications is hindered by the need for expensive hardware components, such as numerous radio frequency (RF) chains and high-resolution digital-to-analog converters. 
These limitations highlight an urgent need for a novel CF-mMIMO SWIPT topology, based on next-generation multiple access techniques (NGMA), that integrates advanced technologies to enable efficient and energy-saving operation. Notably,
a pioneering work by Huang~\ettall~demonstrated the integration of single-layer diagonal reconfigurable intelligent surfaces (RISs) within an access point (AP)'s antenna radome to directly control its signal radiation characteristics~\cite{Huang:TCOM:2023}. While this represents a significant step forward, the inherent structure of RISs imposes limitations relating to the confined degrees of freedom (DoF) for signal manipulation because of their limiting ability to adjust the beam pattern effectively.

To address these challenges, we turn to stacked intelligent metasurfaces (SIMs), an emerging technology that offers a superior architecture. Unlike RISs, the multi-layered structure of a SIM provides additional spatial DoF for signal manipulation compared to RISs. Additionally, SIMs directly manipulate the electromagnetic wave (EM) through a network of reconfigurable layers, leading to the capability of performing complex signal processing tasks~\cite{an:sim:hardware1}. This ability to perform analog computations drastically reduces the need for complex baseband digital precoding, which in turn lowers hardware costs and energy consumption~\cite{2024:an:nearfieldsim}. From a practical standpoints, SIMs emerge as a highly promising and practical candidate for implementing NGMA architectures in CF-mMIMO SWIPT systems.

\begin{table*}
	\centering
	\caption{\label{tabel:Survey} Contrasting our contributions to the SIM-assisted CF-mMIMO literature}
	\vspace{-0.6em}
	\small
\begin{tabular}{|m{2.3cm}|
>{\centering\arraybackslash}m{1.5cm}|
>{\centering\arraybackslash}m{0.63cm}|
>{\centering\arraybackslash}m{0.63cm}|
>{\centering\arraybackslash}m{0.63cm}|
>{\centering\arraybackslash}m{0.63cm}|
>{\centering\arraybackslash}m{0.63cm}|
>{\centering\arraybackslash}m{0.63cm}|
>{\centering\arraybackslash}m{0.63cm}|
>{\centering\arraybackslash}m{0.63cm}|
>{\centering\arraybackslash}m{0.63cm}|
>{\centering\arraybackslash}m{0.63cm}|
>{\centering\arraybackslash}m{0.63cm}|
>{\centering\arraybackslash}m{0.63cm}|
>{\centering\arraybackslash}m{0.63cm}|
>{\centering\arraybackslash}m{0.63cm}|
}

	\hline
        \centering\textbf{Contributions} 
        &\centering\textbf{This paper}
        &\centering\cite{2024:an:nearfieldsim}
        &\centering\cite{An:SIMHMIMO:JSAC:2023}
        &\centering\cite{Perovic:CL:2024}
        &\centering\cite{sim:isac:2024}
        &\centering\cite{madrl:ris}
        &\centering\cite{Shi:MCOM:2022}
        &\centering\cite{cite:Chien:TWC:2022}
        &\centering\cite{Kaixi:2021:China}
        &\centering\cite{cf:ioe:swipt:yang}
        &\centering\cite{hu:cfmimo:dl}
        &\centering\cite{Li:TC:2024}
        &\centering\cite{shi:arxiv:2024}
        &\centering\cite{shi:APUE:2025}
        \cr

        \hline

        SIM     
        &\centering\checkmark 
        &\centering\checkmark
        &\centering\checkmark
        &\centering\checkmark
        &\centering\checkmark
        &\centering\tikzxmark
        &\centering\tikzxmark
        &\centering\tikzxmark
        &\centering\tikzxmark
        &\centering\tikzxmark
        &\centering\checkmark 
        &\centering\checkmark 
        &\centering\checkmark
        &\centering\checkmark
        \cr        
        \hline

        CF-mMIMO         
        &\centering\checkmark
        &\centering\tikzxmark
        &\centering\tikzxmark
        &\centering\tikzxmark
        &\centering\tikzxmark
        &\centering\checkmark
        &\centering\checkmark
        &\centering\checkmark
        &\centering\tikzxmark
        &\centering\checkmark
        &\centering\checkmark
        &\centering\checkmark
        &\centering\checkmark
        &\centering\checkmark
        \cr
        \hline

        SWIPT         
        &\centering\checkmark
        &\centering\tikzxmark
        &\centering\tikzxmark
        &\centering\tikzxmark
        &\centering\tikzxmark
        &\centering\tikzxmark
        &\centering\checkmark
        &\centering\tikzxmark
        &\centering\checkmark
        &\centering\checkmark
        &\centering\tikzxmark
        &\centering\tikzxmark
        &\centering\tikzxmark
        &\centering\tikzxmark
        \cr
        \hline

        Statistical Design  
        &\centering\checkmark
        &\centering\tikzxmark
        &\centering\tikzxmark
        &\centering\tikzxmark
        &\centering\tikzxmark
        &\centering\tikzxmark
        &\centering\tikzxmark
        &\centering\checkmark
        &\centering\tikzxmark
        &\centering\tikzxmark
        &\centering\tikzxmark
        &\centering\tikzxmark
        &\centering\checkmark
        &\centering\tikzxmark
        \cr
        \hline 

        CTCE DRL  
        &\centering\checkmark
        &\centering\tikzxmark
        &\centering\tikzxmark
        &\centering\tikzxmark
        &\centering\tikzxmark
        &\centering\checkmark
        &\centering\tikzxmark
        &\centering\tikzxmark
        &\centering\tikzxmark
        &\centering\tikzxmark
        &\centering\tikzxmark
        &\centering\tikzxmark
        &\centering\tikzxmark
        &\centering\tikzxmark
        \cr
        \hline

        CTDE DRL   
        &\centering\checkmark
        &\centering\tikzxmark
        &\centering\tikzxmark
        &\centering\tikzxmark
        &\centering\tikzxmark
        &\centering\tikzxmark
        &\centering\tikzxmark
        &\centering\tikzxmark
        &\centering\tikzxmark
        &\centering\tikzxmark
        &\centering\tikzxmark
        &\centering\tikzxmark
        &\centering\tikzxmark
        &\centering\tikzxmark
        \cr
        \hline
        
\end{tabular}
\vspace{-1.0em}
\label{tab:contribution}
\end{table*}
\vspace{-0.5em}
\subsection{Literature Review}
\subsubsection{RIS-assisted CF-mMIMO}
Shi~\textit{et al.}\cite{Shi:MCOM:2022} provided a comprehensive overview of opportunities and challenges in this domain and proposed efficient resource allocation algorithms for wireless energy transfer. Chien~\textit{et al.}~\cite{cite:Chien:TWC:2022} analytically investigated CF-mMIMO systems with RISs for transmission to a zone of receivers and proposed a low-complexity sub-optimal phase shift (PS) configuration to minimize NMSE, thereby enhancing SE. Yang~\ettall~\cite{Kaixi:2021:China} applied alternative optimization (AO) to maximize the max sum-rate (MSR) metric in RIS-aided CF-mMIMO SWIPT.
Yang~\ettall~\cite{cf:ioe:swipt:yang} employed particle-swarm-based optimization to jointly maximize sum-HE and SE metrics via PS optimization.

\subsubsection{SIM-assisted CF-mMIMO}
Li~\ettall~\cite{hu:cfmimo:dl} maximized SE metrics by optimizing SIM's PS using the AO technique. 
Li~\ettall~\cite{Li:TC:2024} studied the holographic cell-free systems in the presence of SIMs with hardware impairments.
Shi~\ettall~\cite{shi:arxiv:2024} investigated two-stage pilot allocation and MSR sub-problems using an AO method in uplink SIM-aided CF-mMIMO system.
Shi~\ettall~\cite{shi:APUE:2025} proposed a three-phase scheme that combines heuristic search, AO, and projected gradient descent (GD) for interference minimization, AP power transmission, and SIM's PSs for the MSR problem under instantaneous CSI-based design.

\subsubsection{SIMs versus RISs}
In recent years, SIM-related studies have progressed toward physics-based investigations that explicitly model electromagnetic effects, such as inter-layer reflections and mutual coupling, thereby revealing the physical mechanisms that impact the passive beamforming capability of practical SIMs. Li~\ettall~\cite{Li:TC:2024} investigated the substantial impact of SIMs' physical properties, including the number of metasurfaces per SIM, inter-layer distance, hardware quality factors onto the achievable rates
Nerini~\ettall~\cite{Nerini:CL:2024} proposed a physically consistent SIM-assisted single-input single-output transmission to illustrate the effect of mutual coupling on the SIM performance. 
Yahya~\ettall~\cite{Yahya_coupling_sim} proposed a tractable T-parameter-based model that simplifies the coupling-aware analysis of cascaded metasurfaces. Li~\ettall~\cite{Li_LEO_2025} investigated SIM-based beamforming for multi-altitude satellite networks, revealing that incorporating mutual coupling into the beamforming design can effectively compensate for the reduced throughput. The numerical findings demonstrated that it is crucial to account for physical and hardware impairments in SIM-assisted wireless systems.

\vspace{-0.5em}
\subsection{Research Gap and Summarized Contributions}
It is well-established that SIMs hold great promise in extending coverage and mitigating blockage effects, but their integration into CF-mMIMO systems with SWIPT remains underexplored. In addition, existing studies primarily rely on traditional optimization frameworks such as AO, GD, and successive convex approximation (SCA). Although these methods offer advantages in analytical tractability, they share a common limitation of poor scalability. Specifically, AO is prone to local optima, SCA depends on complex convex approximations, and all entail considerable computational overhead~\cite{optimization:survey:2023}.

Deep reinforcement learning (DRL) offers a compelling alternative by inherently capturing non-convex problem characteristics without requiring handcrafted convexification. Conventional DRL frameworks typically employ a \textit{centralized training and centralized execution} (CTCE) strategy, in which a single agent, equipped with a deep neural network (DNN) actor handles all decision-making for the entire network \cite{HuaTcom2025, cite:thien_iotj_2023, 2023:tuan:maddpg}. In this strategy, the single agent observes a global state, computes a global joint action, and optimizes its policy through a centralized backpropagation process. Once trained, this single agent must also execute the global policy in real-time. Figure 1 illustrates a SIM-assisted CF-mMIMO SWIPT system operating under the CTCE framework, where a DNN-equipped central processing unit (CPU) is responsible for training and subsequently determining all signal processing, power allocation, active and passive beamforming for all APs. Consequently, as the CF-mMIMO networks scale up, the single-processing CPU faces an exponentially growing dimensionality of optimizing variables, leading to increased computational complexity, communication and fronthaul overhead. These factors constitute critical bottlenecks for learning large-scale CF-mMIMO deployments.

To overcome the scalability limitations of CTCE, we hereafter propose a centralized training, decentralized execution (CTDE) framework tailored for SIM-assisted CF-mMIMO SWIPT systems, as shown in Fig. 2. Compared to the CTCE strategy, CTDE equips each AP with its own DNN-based actor and decouples the learning and execution phases. This elaborates \textit{intelligent distributing optimization}, where each AP makes optimization decisions based on its local observations and a lower-dimensional action space, instead of relying on a single centralized policy that must jointly output all AP decisions. The distributed learning strategy significantly reduces the input–output dimensionality of each actor network, which in turn reduces the DNN 's size and per-step computational cost, making both training and inference more scalable in large-scale CF-mMIMO deployments.
At the same time, CTDE improves the learning behavior because it mitigates the \textit{curse of dimensionality} inherent in fully centralized DRL as the policy gradient variance and critic estimation errors typically grow when the state and action spaces become excessively high-dimensional. The CTDE strategy tends to yield more stable updates and effective learning convergence, while ensuring high scalability and low fronthaul latency for large-scale networks, efficiently handling the growing complexity of CF-mMIMO SWIPT systems.

\begin{figure*}[t]
    \centering
    \begin{minipage}[t]{0.45\textwidth}
        \centering
        \includegraphics[width=\textwidth]
        {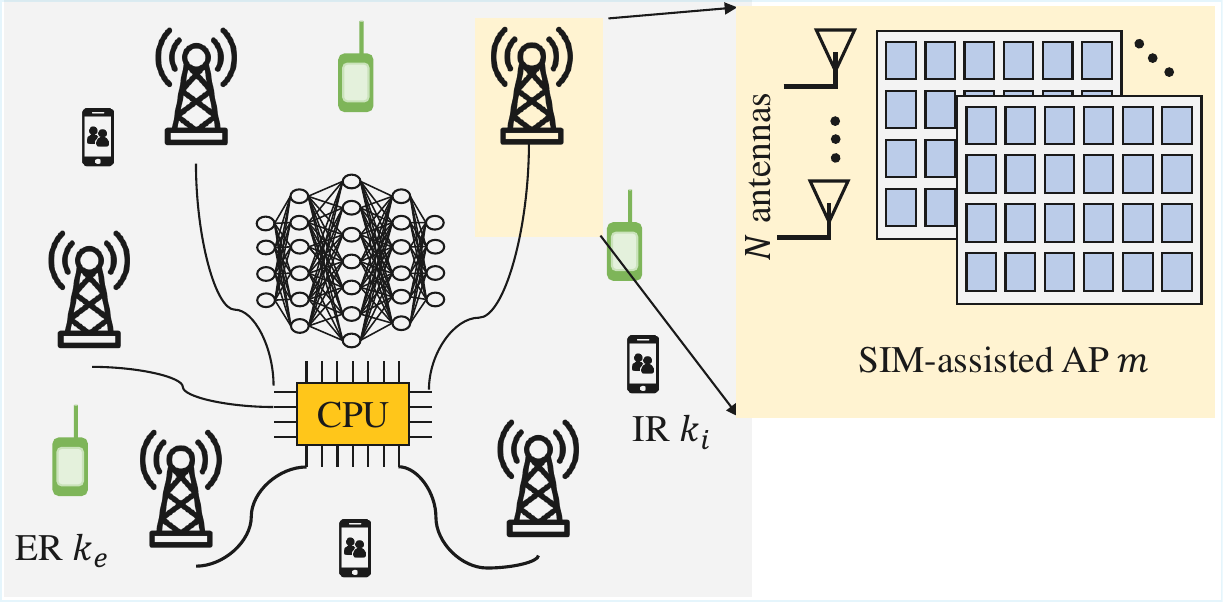} 
        \vspace{-0.6em}
        \caption{\small The CTCE CF-mMIMO-assisted SWIPT using SIMs.}
        \label{fig:system_model_a}
    \end{minipage}
    \hfill 
    \begin{minipage}[t]{0.45\textwidth}
        \centering
        \includegraphics[width=\textwidth]{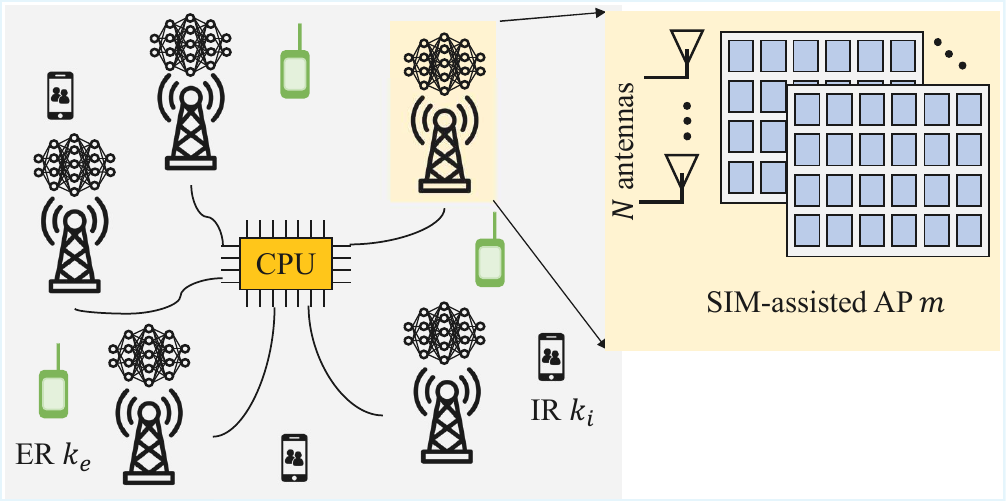} 
        \vspace{-0.6em} 
        \caption{\small The CTDE CF-mMIMO-assisted SWIPT using SIMs.}
        \label{fig:system_model_b}
    \end{minipage}
\vspace{-0.5em} 
\end{figure*}

To this end, we focus on a CF-mMIMO SWIPT system serving two distinct user groups: information-decoding receivers (IRs) and energy-harvesting receivers (ERs). To optimize resource utilization, we adopt the methodology from~\cite{Mohammadi:GC:2023}, categorizing APs into information-APs (I-APs) and energy-APs (E-APs). Each AP's signal transmission is enhanced by integrating a SIM within its antenna radome. We employ the protective partial zero-forcing (PPZF) precoder to safeguard transmissions to IRs from interference by applying partial ZF at I-APs and protective maximum ratio transmission (PMRT) at E-APs~\cite{Hua:WCNC:2024}. The primary contribution of this work is enabling autonomous decision-making through a CTDE DRL framework, which leverages DNNs at both the individual APs and a central CPU.
Our main contributions can be summarized as follows:
\begin{itemize}
    \item We explore a statistical CSI-based SIM-assisted CF-mMIMO SWIPT system, which accounts for channel estimation errors, pilot contamination (PC), and non-linear energy harvesting (NLEH) effects. Under these conditions, we derive closed-form expressions for the ergodic SE of the IRs and the average HE of the ERs. Using these expressions, we formulate a mixed-integer non-convex optimization problem to jointly determine the AP operation mode, power allocation, and SIMs' PS configuration, aiming to maximize the average total HE at the ERs, while meeting minimum SWIPT quality-of-service (QoS) requirements.
    \item The optimization problem is reformulated as a Markovian environment, enabling the APs to interact with it and jointly optimize solutions via a reward-driven feedback mechanism. A reward normalization step is then introduced to enhance learning stability, reduce hyperparameter sensitivity, and lower policy gradient variance, ultimately improving DRL convergence.
    \item To serve as a robust comparison benchmark, we develop a two-stage convex-based optimization framework. In the first stage, we heuristically optimize the SIMs' PS. In the second stage, we employ a convex-based method to jointly address the remaining subproblem, which entails APs mode selection and transmit power allocation.
    \item Through extensive numerical simulations, we validate the effectiveness of integrating SIMs into the CF-mMIMO SWIPT framework. A comprehensive performance comparison is carried out among various CTDE and CTCE learning strategies, and the convex-based benchmark, highlighting trade-offs in solution robustness, computational complexity, and real-time adaptability.
\end{itemize}

A comparison of our contributions against the SIM-oriented literature is presented in Table~\ref{tab:contribution}.


\textit{Notation:} 
We use bold lower (capital) case letters to denote vectors (matrices). The superscript $(\cdot)^\dag$ stands for the Hermitian transpose; $\mathbf{I}_N$ denotes an $N\times N$ identity matrix; $\boldsymbol{0}_{N \times K}$ denotes the matrix with all elements zero; $[\qA]_{i,j}$ denotes the $(i,j)$th entry of $\qA$; $\trace(\cdot)$ and $\diag(\cdot)$ denote the trace and diagonal operators, respectively; $\Vert\cdot\Vert_{\mathrm{F}}$ denotes the Frobenius norm; $\text{mod}(\cdot,\cdot)$ is the modulus operation; $ \lceil \cdot \rceil$ denotes the ceil function; $\bar{\qx}$ and $\tilde{\qx}$ denote the  line-of-sight (LoS) and  non-line-of-sight (NLoS) terms of a complex-valued Gaussian distributed vector $\qx$; a circularly symmetric complex Gaussian variable with variance $\sigma^2$ is denoted by $\mathcal{CN}(0,\sigma^2)$. Finally, $\mathbb{E}\{\cdot\}$ denotes statistical expectation.

\vspace{-0.4em}
\section{System Model}~\label{sec:Sysmodel}
Consider a SIM-assisted CF-mMIMO SWIPT system where $M$ APs simultaneously serve $\Kii$ single-antenna IRs and $\Kee$ single-antenna ERs over the same frequency band. We assume perfect synchronization among the distributed APs.\footnote{Synchronization among the APs is important from a practical point of view. According to \cite{synchronization}, the CPU can estimate residual per-AP synchronization errors using uplink pilots. Over one coherence block of duration $T_C$, carrier frequency offset $\Delta f_m$ induces a phase rotation $\Delta \phi_m \approx 2 \pi \Delta f_m T_C$. Subsequently, a practical verification step is to estimate $\Delta f_m$ and determine that the phase misalignment over $T_C$ is within a small tolerance such that the performance gap to the perfect synchronization bound is negligible. This is feasible since $\Delta f_m$ is pilot-aided and can be estimated with negligible overhead beyond the existing UL training.} For notational simplicity, we define the sets $\K$, $\Kii$, $\Kee$, $\M$, $\mathcal{L}$, and $\mathcal{S}$ to respectively collect the indices of the total receivers, IRs, ERs, APs, SIMs, and reflective elements per metasurface. We model AP $m$ with a uniform linear array (ULA) of $N$ active antennas that uniformly illuminates its attached SIM as in~\cite{sim:isac:2024, An:Arxiv:2023}.
Each SIM comprises $L$ metasurfaces, where each metasurface comprises $S$ reflective elements.
We denote the attached $L$-layer SIM at AP-$m$ as $\boldsymbol{\Phi}^{m \ell} \in \C^{S \times S} \triangleq \diag( e^{j\theta^{m \ell}_{1}}, \ldots, e^{j\theta^{m \ell}_{s}}, \ldots, e^{j\theta^{m \ell}_{S}} )$, where  $\theta^{m \ell}_{s} \in [0, 2 \pi]$ is the PS element.

\begin{Remark}
    We denote a coupling-aware PS of metasurface $(m, \ell)$ as $(m, \ell)$ as $\tilde{\boldsymbol{\Phi}}^{m \ell}\triangleq \boldsymbol{\Phi}^{m \ell}\big( \qI_{S} - \boldsymbol{\Phi}^{m \ell} \qS^{m, \ell} \big)^{-1}$ captures multiport properties including mutual coupling and multiple reflections \cite{coupling_aware_ps}. By inserting a carefully designed impedance matching or decoupling network between the SIM elements and their loads, the off-diagonal coupling terms can be canceled \cite{nossek:circuittheory, semmlerTWC2025}. This technique makes the overall response of the compensated SIM approximately diagonal. For analytical tractability, we assume negligible mutual coupling in the propagation manipulated through the SIMs, mathematically speaking $\qS^{m, \ell} \approx \boldsymbol{0}_{S\times S}$, thus $\tilde{\boldsymbol{\Phi}}^{m \ell}= \boldsymbol{\Phi}^{m, \ell}, \forall m \in \M, \ell \in L$. This serves as a model for a coupling-compensated SIM, allowing system performance bounds to be derived.
\end{Remark}

\vspace{-0.8em}
\subsection{Channel Model}
We assume a block fading channel that remains invariant and frequency-flat during each coherence interval $\tau_{c}$, including $\tau$ samples for UL channel estimation training and the rest for downlink (DL) transmission. For a general interpretation, let $k \in \K$ be the index of the receivers and $\mathtt{i}\in\{\mathtt{I}, \mathtt{E}\}$ be the index corresponding to wireless information transmission (WIT) and WPT operations. 
In each coherence interval, the instantaneous channel from AP $m$ to receiver $k$ is mathematically expressed as $\gmk \triangleq \FmRIS^{\dag} \zmk$, where $\FmRIS \in \C^{S \times N}$ and $\zmk \in \C^{S \times 1}$ denote the propagations from the ULA to the SIM and from the last SIM layer of AP $m$ to receiver $k$, respectively. Mathematically speaking:
\vspace{-0.5em}
\begin{equation}~\label{eq:Fm}
\FmRIS \triangleq 
        \boldsymbol{\Phi}^{m L} \qH^{m L} 
        \cdots 
        \boldsymbol{\Phi}^{m \ell} \qH^{m \ell}
        \cdots 
        \boldsymbol{\PHI}^{m 1} \qH^{m 1}, 
\end{equation}
where $\qH^{m 1} \in \C^{S \times N}$ and $\qH^{m \ell} \in \C^{S \times S}$ are the EM-wave propagation matrices from AP $m$ to the first SIM layer and from the $(\ell - 1)$-th layer to the $\ell$-th layer, $\forall \ell > 1$, respectively. According to Rayleigh-Sommerfeld diffraction theory~\cite{An:SIMHMIMO:JSAC:2023}, the coefficient from the $\breve{s}$-th element of the $(\ell - 1)$-th layer to the $s$-th element of the $\ell$-th layer is formulated as
\vspace{-0.2em}
\begin{equation}~\label{eq:rayleigh_sommerfeld_coeffi}
[\qH^{m \ell}]_{s,\breve{s}} \!=\! \frac{\lambda^2 \cos(\chi^{m \ell}_{s,\breve{s}})}{4 d^{\ell}_{s,\breve{s}}}\Big(\frac{1}{2\pi d^{\ell}_{s,\breve{s}}} \!-\! \frac{j}{\lambda} \Big) \exp\Big( \frac{j 2 \pi d^{\ell}_{s,\breve{s}}}{\lambda} \Big),
\end{equation}
where $\lambda$ is the wavelength, $\chi^{m \ell}_{s,\breve{s}}$ is the angle between the propagation and normal directions of the $(\ell - 1)$-th layer, while $d^{\ell}_{s,\breve{s}}$ is the geometric distance from the $\breve{s}$-th element of the $(\ell - 1)$-th layer to the $s$-th element of the $\ell$-th layer, i.e.,
\vspace{-0.3em}
\begin{equation}~\label{eq:3D_distance}
d^{\ell}_{s,\breve{s}} = \sqrt{\Big( d_{\mathtt{PS}} \sqrt{(s_{z} - \breve{s}_{z} )^2 + (s_{y} - \breve{s}_{y})^2} \Big)^{2} + d_{\mathtt{SIM}}^{2} },
\end{equation}
where $d_{\mathtt{PS}}$ and $d_{\mathtt{SIM}}$ are the spacings between two adjacent elements and metasurfaces, respectively; the indices of the elements $s_{z}$ and $s_{y}$ along the $z$- and $y$-axes can be computed as $s_{z} \triangleq \lceil s/ \sqrt{S} \rceil$ and $s_{y} = \mod(s-1, \sqrt{S}) + 1$, respectively.

We note that the propagation coefficient from the $n$-th transmit antenna to the $s$-th element of the first SIM layer $[\qH^{m, 1}]_{s,n}$ is formulated by~\eqref{eq:rayleigh_sommerfeld_coeffi}. The Ricean instantaneous channel from the last SIM layer of AP $m$ to receiver $k$ is modeled as
\vspace{-0.3em}
\begin{align}
    \zmk \triangleq \sqrt{\barbetamk}\big(\sqrt{\kappa }\zmklos + \zmknlos \big),
\end{align}
where $\barbetamk \triangleq \betamk/(1+\kappa)$, with $\betamk$ representing the large-scale fading coefficient (LSFC),  $\kappa$ is the Ricean factor, $\zmknlos$ is the NLoS component, 
whose $s$-th element is distributed as 
$[\zmknlos]_{s} \sim \CN(0,1), \forall m \in \M \text{, } \forall k \in \K$. In addition,  $\zmklos$  is the LoS component whose $s$-th element  is given by
\vspace{-0.5em}
\begin{align}
    \big[ \zmklos \big]_{\!s} 
    &\!=\! \exp \!\Big( 
    \!
    \zeta s_{z} \sin{\big(\chi^{L}_{s,k} \big)} 
    \!+\!
    \zeta s_{y} \sin\big(\varepsilon^{L}_{s,k}\big) \cos\big(\chi^{L}_{s,k}\big) \Big),
\end{align}
where $\zeta \triangleq j 2 \pi d_{\mathtt{PS}} / \lambda$, $\sin\big(\varepsilon^{L}_{s,k}\big) \cos\big(\chi^{L}_{s,k}\big) = (n_{z} \!-\! k_{z})/d^{L}_{n,k}$, $\sin{\big(\chi^{L}_{s,k} \big)} = (\vert n_{z} - k_{z} \vert)/\big(d^{L}_{n,k} \big)$, 
where $d^{L}_{n,k}$ is the geometric distance from receiver $k$ to  element $n$ of the last SIM layer, while $\chi^{L}_{n,k}$ and $\varepsilon^{L}_{n,k}$ are the elevation and azimuth angle between the propagation direction and the $Oxy$ and $Oyz$ surfaces, respectively.
To facilitate subsequent derivations, we present the channel statistics in the following remark.
\begin{Remark}~\label{remark:2ndmoment_gmk}
By leveraging the second-order moment of the channel components, we obtain
\vspace{-0.5em}
\begin{equation}
    \Ex\{\Vert \gmk \Vert^{2} \} \!\!=\!\! \kappa \barbetamk \trace\big( \FmRIS \FmRIS^{\dag} (\zmklos)^{\dag} \zmklos  \big)
    \!\!+\!\! \barbetamk \trace\big( \FmRIS \FmRIS^{\dag} \big),\nonumber
\end{equation}
while $\gmk$ is distributed as $\gmk \sim \CN \big(\gmkbar, \barbetamk \FmRIS^\dag \FmRIS \big)$, where $\gmkbar \triangleq \sqrt{\barbetamk \kappa} \FmRIS^{\dag} \zmklos$~\cite[Lemma~1]{Hua:WCNC:2024}. 
\end{Remark}

\vspace{-0.9em}
\subsection{Uplink Channel Estimation}\label{phase:ULforCE}
In the training phase, all IRs and ERs transmit their pilot sequences of length $\tau$ symbols to the APs.
We consider the general case where shared pilot sequences are used and denote $\mathcal{P}_k \subset \K $ as the set of receivers $k'$, including $k$, that are assigned with the same pilot sequence as the receiver $k$. In this regard, $i_{k} \in \{1, \ldots, \tau \}$ denotes the index of the pilot sequence used by receiver $k$. The pilot sequences $\VARPHI_{i_k} \in \C^{\tau \times 1}$, with $\Vert \VARPHI_{i_k} \Vert^{2} = 1$, are mutually orthogonal so that $\VARPHI_{i_{k'}}^{\dag}\VARPHI_{i_k}=1 \text{ if } i_{k'} \in \mathcal{P}_k$ and $\VARPHI_{i_{k'}}^{\dag}\VARPHI_{i_k}=0,\text{ otherwise}$. Then, the signal received at AP $m$ can be expressed as
\vspace{-0.3em}
\begin{equation}~\label{eq:receivedpilotsequence}
    \qY_{p,m}^{\mathtt{i}} \!=\! \sqrt{\tau\snrul }\big( \gmk  \VARPHI_{i_k}^\dag  \!+\!\sum\nolimits_{k'\in \K \setminus k}\! \qg^{\mathtt{i}}_{mk'}  \VARPHI_{i_{k'}}^\dag\big) \!+\! \qN_{p,m},
\end{equation}
where $\snrul$ is the normalized UL signal-to-noise ratio (SNR), while $\qN_{p,m} \in \C^{N \times \tau}$ is the receiver noise matrix containing independent
and identically distributed (i.i.d.) $\CN(0,\Snn)$ random variables (RVs). Similar to \cite{Hua:WCNC:2024}, to estimate $\gmk$, we 
first project $\qY_{p,m}^{\mathtt{i}}$ onto the $i_{k}$-th pilot sequence, and then apply the linear minimum mean-squared error (MMSE) estimation technique. The corresponding channel estimate is
\vspace{-0.3em}
\begin{align}
    \hgmk 
    & \!=\! \Ex\{ \gmk \} 
    \!+\! 
     \qC_{\gmk \ymk }\big(\qC_{\ymk \ymk }\big)^{\!\!-1}  \big(\ymk \!-\! \Ex\{ \ymk \}\big)
     \nonumber\\
    &\!= \!\sqrt{\barbetamk \kappa} \FmRIS^{\dag} \zmklos +     \qA_{mk}
    \nonumber\\
    & \hspace{2em} \times
    \Big(\!  \sqrt{\tau \snrul}
        \sum\nolimits_{k' \in \mathcal{P}_k } 
        \!\! { \sqrt{ \barbetamkp} \FmRIS^{\dag} \zmkpnlos + \Tilde{\qn}_{p,mk} } 
        \!\Big),
\end{align}
where $\ymk\!=\!\!\qY_{p,m}^{\mathtt{i}}\VARPHI_{i_k}$, $\Tilde{\qn}_{p,mk} \!\triangleq\! \qN_{p,m} \VARPHI_{i_k} \!\sim \!\CN(\boldsymbol{0},\Snn \qI_{N})$, and $$\qA_{mk}\!=\!\! \Big(\!\sqrt{\tau \snrul} \!\barbetamk \FmRIS^{\dag} \FmRIS \Big) \! \Big( \!\tau \snrul \!\sum\nolimits_{k' \in \mathcal{P}_k}{\!\! \barbetamkp }  \FmRIS^{\dag} \FmRIS \!+ \Snn \qI_{N}\Big)^{\!-1}.$$ Thus, the statistical distribution of $\hgmk$ follows $\hgmk \sim \CN \big( \gmkbar, \Covhatgmk \big)$, where $\Covhatgmk \triangleq \sqrt{\tau \snrul} \barbetamk \FmRIS^{\dag} \FmRIS \qA_{mk}$. Moreover, we obtain
\vspace{-0.3em}
\begin{equation}
    \Ex\{\Vert \hgmk \Vert^2 \} \triangleq  \kappa \barbetamk \trace(\FmRIS \FmRIS^\dag \zmklos (\zmklos)^\dag ) + \gamma^{\mathtt{i}}_{mk},
\end{equation}
where
\vspace{-0.7em}
\begin{equation}
    \gamma^{\mathtt{i}}_{mk} = \frac{\tau \snrul (\barbetamk)^2 \big(\trace(\FmRIS \FmRIS^{\dag})\big)^2 }
    {\tau \snrul \sum_{k' \in \mathcal{P}_k}{ \barbetamkp } \trace(\FmRIS \FmRIS^{\dag}) + N \Snn }.
\end{equation}
Subsequently, the expectation of the norm square of the estimate error $\tilgmk \triangleq \gmk - \hgmk$ is $\Ex \{ \Vert \tilgmk \Vert^2 \}
    = \barbetamk \trace(\FmRIS \FmRIS^{\dag}) - \gamma^{\mathtt{i}}_{mk}$.\footnote{Pilot contamination interacts with SIM phase-shift optimization through the channel estimation quality. Particularly, the MMSE estimation variance term depends on the SIM configuration via the quadratic form $\trace (\qF_m \qF_m^{\dag})$. As a result, the receivers sharing the same pilot contribute to the denominator of the term $\gamma_{mk}^{\mathtt{i}}$, scaled by their large-scale fading and the SIMs phase-shift. Hence, optimizing the SIM phase shifts not only shapes the downlink beamforming gain, but also implicitly alters the effective pilot contamination level during the uplink training phase.}

\subsection{Downlink Wireless Information and Power Transmission}
\vspace{-0.1em}
We define the binary variable $a_m$ as the operation indicator, where $a_m = 1$ ($a_m = 0$) if the $m$-th AP operates as an I-AP (E-AP). Let $\xik$ ($\xej$) denote the information (energy) symbol transmitted to  IR $k_i$ (ER $k_e$)  that satisfies $\Ex\{ \vert \xik \vert^2 \} = \Ex\{ \vert \xej \vert^2 \} =1$ and $\etamkI$ ($\etamjE$) denote the power control coefficients associated with AP $m$ and IR $k_i$ (ER $k_e$). Hence, the  signal transmitted from the $m$-th AP can be expressed as
\vspace{-0.2em}
\begin{align}~\label{eq:x_m}
    \qx_{m}
    &= \sum\nolimits_{k_{i}\in\Kii}\sqrt{a_m\rho_{d}\etamkI} \wimk \xik \nonumber \\
    &+  \sum\nolimits_{k_e\in\Kee} \sqrt{(1-a_m)\rho_{d}\etamjE}\wemj \xej,
\end{align}
where $\snrdl = \tilde{\rho}_{d}/\Snn$ is the normalized DL SNR; $\wimk \in \C^{N\times 1}$ ($\wemj\in \C^{N\times 1}$) represents the precoding vector for IR $k_i$ (ER $k_e$) with $\Ex\big\{\big\Vert\wimk\big\Vert^2\big\}=\Ex\big\{\big\Vert\wemj\big\Vert^2\big\}=1$; $\etamkI$ and $\etamjE$ are the power control coefficients at AP $m$, chosen to satisfy the power transmission constraint
\vspace{-0.2em}
\begin{align}~\label{eq:transmitpowerconstraint}
    & a_m\Ex\big\{\big\Vert \qx_{\mathtt{I},m}\big\Vert^2\big\}+ (1-a_m)\Ex\big\{\big\Vert \qx_{\mathtt{E},m}\big\Vert^2\big\}\leq \rho_{d}.
\end{align}
Then, the IR and ER respectively receive
\vspace{-0.2em}
\begin{subequations}
 \begin{align}
    \yik 
    &\!=\!  
    \sum\nolimits_{k'_i\in \Kii } \sum\nolimits_{m \in \M } \sqrt{a_m \rho_d\etamkpI} (\gmkiu)^\dag \wimkp \xikp \nonumber \\
    &\hspace{-2em}+\! \sum\nolimits_{k_e\in\Kee}  \! \sum\nolimits_{m \in \M } \!\! \sqrt{ \!(1\!-\!a_m) \rho_d\etamjE} \!(\gmkiu)^{\!\dag} \wemj \xej 
    \!\!+\! \!
    n_{d,k_{i}},\nonumber\\
    \yej
    &\!\!=\!  \sum\nolimits_{k'_e\in \Kee }\!  \sum\nolimits_{m \in \M}\!\!
    \sqrt{ \!(1\!-\!a_m) \rho_d\etamjpE} (\gmjue)^\dag \wemjp \xejp   \nonumber \\
    &\hspace{-2em}+\!  
    \sum\nolimits_{k_i \in\Kii }\!
    \sum\nolimits_{m\in\M }\!\! 
    \sqrt{a_m \rho_d\etamkI} (\gmjue)^\dag \wimk \xik \! +\! n_{d,k_{e}}\!, \nonumber
\end{align}
\end{subequations}
respectively, where $n_{d,k_{i}},n_{d,k_{e}} \!\sim \! \CN(0,1)$ are the additive noise terms.
\vspace{-0.8em}
\subsection{Protective Partial Zero Forcing Precoding} 
According to~\cite{Mohammadi:GC:2023, cite:pzf_pmrt_2020}, we employ the PPZF precoding scheme, which strategically combines distinct beamforming techniques for the two AP modes. At the I-APs, zero-forcing (ZF) is used to eliminate inter-user interference among the information receivers. At the E-APs, the goal is to use PMRT, as it is optimal for wireless power transfer, especially in large-antenna systems.
However, conventional MRT from the E-APs would cause significant interference to the IRs. To address this, PPZF implements a crucial modification: the PMRT precoder is projected into the orthogonal complement of the IRs' collective channel matrix. This modified scheme forces the energy signals into the null space of the IRs. As a result, if the channel estimation is perfect, the information transmission to IRs is fully protected from the energy signals directed at the ERs. Let $\Ghmu = \big\{ \hgmki \big\} \in \C^{N\times \Kii}$ and $\Ghms= \big\{ \hgmkue \big\} \in \C^{N\times \Kee}$ be  the matrices of the estimated channels between AP~$m$ and all IUs, and all EUs, respectively.
By invoking~\cite{Hua:WCNC:2024}, the ZF and PMRT precoders at AP $m$ for IR $k_i$ and ER $k_e$ are given by
\vspace{-0.3em}
\begin{subequations}~\label{eq:wemrtzf}
 \begin{align}
    \wimk^{\PZF} &= (\alphaPZFmki)^{-1} \Ghmu \Big(\big(\Ghmu\big)^\dag \Ghmu\Big)^{-1} \!\qe_{k_i}^{\mathtt{I}}
    ,~\label{eq:wipzf}
    \\
    \wemj^{\PMRT} &= (\alphaPMRTmke)^{-1} \qB_m\Ghms\qe_{k_e}^{\mathtt{I}}
    ,~\label{eq:wemrt}
\end{align}   
\end{subequations}
where $\qe_{k_i}^{\mathtt{I}}(\qe_{k_e}^{\mathtt{E}})$ denotes the $k_i(k_e)$-th column of $\qI_{K_i}(\qI_{K_e})$; $\qB_m$ denotes the projection matrix onto the orthogonal complement of $\Ghmu$ so that $\gmKiuH \qB_m =\boldsymbol{0}$. Thus, $\qB_m$ can be computed as
\vspace{-0.2em}
\begin{align}
  \qB_m  = \qI_{N}  - \Ghmu \Big( \big(\Ghmu\big)^\dag \Ghmu\Big)^{-1}  \big(\Ghmu\big)^\dag,
\end{align}
and $\Ex\{\qB_{m}\} \!=\! \Ex\{\qB_{m}^{\dag} \qB_{m}\} \!=\! \Ex\{\qB_{m} \qB_{m}^{\dag}\}=\qI_{N}, \forall m \in \M$, which was mathematically proven in~\cite{Mohamed_projection_2024}.
Note that in~\eqref{eq:wemrtzf}, $\alphaPZFmki$ and $\alphaPMRTmke$ are normalization factors, given by
\vspace{-0.3em}
\begin{subequations}
\begin{align*}
    \alpha_{m k_i}^{\PZF}&\triangleq\Big(\Ex \Big\{ \big\Vert \Ghmu \big(\big(\Ghmu\big)^{\dag} \Ghmu\big)^{-1} \qe_{k_i}^{\mathtt{I}}  \big\Vert^2 \Big\}  \Big)^{-\frac{1}{2}}
    \nonumber\\
    &=
    \Big(
    \Big[ \Ex \! \Big\{ \!
    \Big( \!
    \big(\Ghmu\big)^{\dag} \Ghmu
    \Big)^{\!\!-1}
    \Big\}
    \Big]_{k_i, k_i}
    \Big)^{\!\! -\frac{1}{2}},
    \\
    {\alpha_{m k_e}^{\PMRT}}&\triangleq\Big( \Ex \big\{ \big\Vert \qB_m\Ghms\qe_{k_e}^{\mathtt{E}} \big\Vert^2\big\} \Big)^{\!\! -\frac{1}{2}}
    \!\!=\!\!
    \Big( \!
    \Big[
    \Ex\Big\{\! \big(\Ghms\big)^{\!\dag} 
    \Ghms \!\Big\} \!
    \Big]_{k_e k_e}
    \Big)^{\!\! -\frac{1}{2}}.    
\end{align*}
\end{subequations}
Note that $\big(\Ghmu\big)^{\dag} \Ghmu$ follows a non-central Wishart distribution with the non-identical covariance matrices of the row vectors of $\Ghmu$.
Thus, it is challenging to obtain the exact closed-form expressions for $\alphaPZFmki$ and $\alpha_{m k_e}^{\PMRT}$.
\begin{table}[tbp]
\vspace{-0.3cm}
\caption{Summary of Notations}\label{parameter_notations}
\vspace{-0.2cm}
\centering
\scriptsize
    \begin{tabular}{|m{1.6cm}<\centering|m{6.5cm}<\centering|}
        \hline
        \textbf{Notation} & \textbf{Description} \\
        \hline
        $\tau_c, \tau$ & Coherence interval length and UL training duration (in symbols) \\\hline
        $\rho_d, \rho_u$ & Normalized downlink and UL SNR \\\hline
        $\Snn$ & AWGN power \\\hline
        $\VARPHI_{i_k}$ & Pilot sequence used by receiver $k$ \\\hline
        {$\gmk$} & {Actual channel vector for receiver $k$ at AP $m$} \\\hline
        {$\hgmk$} & {Estimated channel vector for receiver $k$ at AP $m$} \\\hline
        {$\tilgmk$} & {Estimated channel error for receiver $k$ at AP $m$} \\\hline
        {$\qF_{m}$} & {Cascade channel from AP $m$ through SIM}  \\\hline
        {$\mathbf{\Phi}_{m,\ell}$} & {Diagonal PS matrix of the $\ell$-th SIM layer at AP $m$}  \\\hline 
        {$\theta_{m,\ell}^s$} & {PS of the $s$-th element on the $\ell$-th layer at AP $m$} \\\hline
        {$\mathbf{H}^{m,\ell}$} & {Propagation matrix between layers $(\ell-1)$ and $\ell$} \\\hline
        {$\boldsymbol{\Theta}$} & {Set of all SIM PS configurations} \\\hline
        {$\zmk$} & {Channel vector from the last SIM layer of AP $m$ to receiver $k$} \\\hline
        {$\zmklos$ ($\zmknlos$)} & {Line-of-sight (non line-of-sight) component of the channel $\zmk$} \\\hline
        {$\betamk$} & {Large-scale fading coefficient between AP $m$ and receiver $k$} \\\hline
        {$\kappa$} & {Ricean fading factor} \\\hline
        {$\qx_{m}$} & {Transmitted signal from the AP $m$} \\\hline
        {$x_{\mathtt{I},m}$ ($x_{\mathtt{E},m}$)} & {Transmitted signal from I-AP (E-AP) $m$} \\\hline
        {$\yik$ ($\yej$)} & {Received signal at IR $k_i$ (ER $k_e$)} \\\hline
        {$n_{d,k_{i}}$ ($n_{d,k_{e}}$)} & {Additive noise at the IR $k_i$ (ER $k_e$)} \\\hline
        {$\wimk^{\PZF}$} & {ZF precoder at AP $m$ for IR $k_i$} \\\hline
        {$\wemj^{\PMRT}$} & {PMRT precoder for ER $k_e$} \\\hline 
        {$\xi, \chi, \phi, \Omega$} & {Circuit parameters for the NLEH model} \\\hline
        {$\SINRki$}  & {Effective SINR at IR $k_i$} \\\hline
        {$\SEk$}  & {Achievable spectral efficiency (SE) at IR $k_i$} \\\hline
        {$\mathcal{Q}_{k_e}$} & {Average harvested energy (HE) at ER $k_e$} \\\hline
        {$\Ex\big\{\mathcal{E}^{\mathrm{NL}}_{k_e}\big\}$}  & {Average non-linear harvested energy at ER $k_e$}  \\\hline
        {$\Gamma_{k_e}$}  & {Minimum EH QoS requirement for ER $k_e$} \\\hline
        {$\mathcal{S}_{k_i}$} & {Minimum SE QoS requirement for IR $k_i$}  \\\hline
        {$a_m$} & {Binary operation mode indicator for AP $m$} \\\hline
        {$\etamkI$ $(\etamjE)$} & {Power control coefficient for IR $k_i$ (ER $k_e$) at AP $m$} \\\hline
    \end{tabular}
    \vspace{-1em}
    \label{table:notation}
\end{table} 
To render the problem more tractable,  we deploy a tight approximation which relies on the law of large numbers (via the Tchebyshev's theorem) when $N \rightarrow \infty$. By using Tchebyshev's theorem~\cite{cramer1970}, we obtain
\vspace{-0.2em}
\begin{align}~\label{eq:16}
    \hspace{-0.6em}
    \frac{1}{N} 
    \Ex\bigg\{ \!
    \Big( \!
    \big(\Ghmu\big)^{\dag} \Ghmu 
    \Big)^{\!\!-1}
    \!\bigg\}
    \!-\!
    \frac{1}{N} 
    \Ex\bigg\{\!\! 
    \Big( \!
    \Ex \! \left\{ \!
    \big(\Ghmu\big)^{\dag} \Ghmu
    \right\}
    \Big)^{\!\!-1} 
    \!\bigg\}
    \!\xrightarrow[N \rightarrow \infty]{\text{a.s.}}  
    \!\boldsymbol{0}.
\end{align}
By using the above asymptotic result, we can obtain the following tight approximations
\vspace{-0.3em}
\begin{align}\label{eq:alphapzf:approx}
    \alphaPZFmki &\approx \big(\big[ (\hatWi)^{-1} \big]_{k_i, k_i} \big)^{\!-\frac{1}{2}},
\end{align}
where 
\vspace{-0.3em}
\begin{align}~\label{eq:alphapzf2:approx}
    \hatWi &= 
    \kappa (\qD^{\mathtt{I}}_{m})^{1/2} (\bar{\qZ}^{\mathtt{I}}_{m})^{\dag} \FmRIS \FmRIS^{\dag} \bar{\qZ}^{\mathtt{I}}_{m} (\qD^{\mathtt{I}}_{m})^{1/2} + \qD^{\mathtt{I}}_{m},
\end{align}
with $\bar{\qZ}^{\mathtt{I}}_{m}\triangleq \{ \zmkilos \}, \forall k_i \in \Kii$; $\qD^{\mathtt{I}}_{m} \in \R^{K_{\mathtt{I}} \times K_{\mathtt{I}}}$ is diagonal with $[\qD^{\mathtt{I}}_{m}]_{k_i, k_i} \triangleq \barbetamki$. The detailed proof of \eqref{eq:alphapzf:approx} is provided in Appendix~\ref{appendix:A1}.
Following a similar methodology to the derivation of \eqref{eq:alphapzf:approx}, we can derive $\alphaPMRTmke$ as 
\vspace{-0.2em}
\begin{align}~\label{eq:alphapmrt:approx}
    \alphaPMRTmke &\approx \big( \big[ \hatWe \big]_{k_e, k_e} \big)^{-\frac{1}{2}}, 
\end{align}
where
\vspace{-0.2em}
 \begin{align}
    \hatWe &= 
    \kappa (\qD^{\mathtt{E}}_{m})^{1/2} (\bar{\qZ}^{\mathtt{E}}_{m})^{\dag} \FmRIS \FmRIS^{\dag} \bar{\qZ}^{\mathtt{E}}_{m} (\qD^{\mathtt{E}}_{m})^{1/2} + \qD^{\mathtt{E}}_{m},
\end{align}
with $ \bar{\qZ}^{\mathtt{E}}_{m}\triangleq \{ \zmkelos \}, \forall k_e \in \Kee $; $\qD^{\mathtt{E}}_{m} \in \R^{K_{\mathtt{E}} \times K_{\mathtt{E}}}$ are diagonal matrices with $[\qD^{\mathtt{E}}_{m}]_{k_e, k_e} \triangleq \barbetamke$. 

To this end, the summary of notations used in the paper is listed in Table~\ref{parameter_notations}.

\begin{Remark}
    The approximation \eqref{eq:alphapzf:approx} is obtained from the asymptotic result \eqref{eq:16}, which is very tight in the massive MIMO regime where the number of antennas per AP is significantly larger than the number of served IRs, i.e., $N \gg K_{\mathtt{I}}$ \cite{cite:HienNgo:cf01:2017}. In this regime, the random $\Big( \! \big(\Ghmu\big)^{\dag} \Ghmu \Big)^{\!\!-1}$ matrix converges asymptotically to its deterministic mean.
\end{Remark}

\section{Performance Analysis and Problem Formulation}
\subsection{Ergodic Downlink Spectral Efficiency with PPZF}
Applying the use-and-then-forget technique~\cite{cite:HienNgo:cf01:2017}, the  signal received at IR~$k_i$ can be expressed as
\begin{align}~\label{eq:yi:hardening}
    \yik &=  \DSki  \xik +
    \BUki \xik 
         +\sum\nolimits_{k_{i}'\in \Kii \setminus k_{i}}
         \!\!
     \IUIki
     \xikp\nonumber\\
    &\hspace{1em}
    + \sum\nolimits_{k_{e}\in\Kee }
     \EUIki \xej + n_{k_i},~\forall k_i\in\Kii,
\end{align}
where $\DSki$, $\BUki$, $\IUIki$, and $\EUIki$ represent the desired signal, the beamforming gain uncertainty, the interference caused by the $k'_i$-th IR, and the interference caused by the $k_e$-th ER, respectively, given by
\begin{subequations}~\label{eq:yi:components}
    \begin{align}
        \DSki &\triangleq \sum\nolimits_{m\in\M} \sqrt{ a_m \snrdl \etamkI} \Ex \big\{(\gmkiu)^\dag \wimk^{\PZF} \big\},  
    \end{align}  
    \begin{align}
        \BUki &\triangleq \sum\nolimits_{m\in\M} \sqrt{a_m \snrdl \etamkI} \Bigl( (\gmkiu)^\dag \wimk^{\PZF} \nonumber \\
        &\hspace{1em}- \Ex \big\{(\gmkiu)^\dag \wimk^{\PZF} \big\} \Bigl)~\label{eq:component_BUk}, 
        \\
        \IUIki &\triangleq \sum\nolimits_{m\in\M} \sqrt{a_m \snrdl \etamkpI} (\gmkiu)^\dag \wimkp^{\PZF},~\label{eq:IU_interference}
        \\
        \EUIki &\triangleq \sum\nolimits_{m\in\M} \! \sqrt{(1 \!-\! a_m) \snrdl \etamjE} (\gmkiu)^\dag \wemj^{\PMRT}.  
    \end{align}  
\end{subequations}

Thus, the DL SE in [bit/s/Hz] for IR $k_i$ is
\begin{align}~\label{eq:SEk:Ex}
    \mathrm{SE}_k
      &=
      \Big(1\!- \!\frac{\tau}{\tau_c}\Big)
      \log_2
      \left(
       1\! + \SINRki\big(\boldsymbol{\Theta},\qa \ETAI, \ETAE\big)
     \right),
\end{align}
while $\SINRki$ is the effective SINR at IR $k_i$,  given by
\begin{align}~\label{eq:SINE:general}
   \SINRki= \frac{
             \vert  \DSki  \vert^2
             }
             {  
             \Ex\{\vert  \BUki  \vert^2\} 
            +\mathrm{IT}
            \!+\!  1},
\end{align}
where $\mathrm{IT}\!\triangleq \!\!\sum_{k_{i}'\in\Kii \setminus k_{i}}\! \Ex\{\vert \IUIki \vert^2\!\} \! + \! \sum_{k_e\in \Kee} \!\Ex \{\vert  \EUIki \vert^2\}$.

\begin{proposition}~\label{Theorem:SE:PPZF}
The SE of IR~$k_i$ is given by~\eqref{eq:SEk:Ex}, where
\begin{align}~\label{eq:numerical:SE:PPZF}
    \SINRki
    =
    \frac{
    \Big(
    \sum\nolimits_{m \in \M} { \alphaPZFmki \!
    \sqrt{a_m \snrdl \etamkI } 
     }
    \Big)^{2}
    }{
    \mathrm{PC}_{k_i} 
    \!+ \Omega_{k_i}    
    \!+\! 
    1
    },
\end{align}
while $ \mathrm{PC}_{k_i} = \sum\nolimits_{k_{i}'\in \mathcal{P}_{k} \setminus k_i } \bigg(\sum\nolimits_{m \in \M} \alphaPZFmkip \!\sqrt{a_m \snrdl \etamkpI }
    \bigg)^{\!\!2},$ and 
\vspace{-0.2em}
\begin{align}
    \Omega_{k_i} &= \sum\nolimits_{k_{i}'\in\Kii } \!\!
    \sum\nolimits_{m \in \M} \!
    a_m \snrdl \etamkpI \bar{e}_{m k_i} \nonumber\\
    &
    \!+\!
    \sum\nolimits_{k_{e}\in\Kee } \!
    \sum\nolimits_{m \in \M} \!
    (1-a_m)\snrdl \etamjE \bar{e}_{m k_i},
\end{align}
with $\bar{e}_{m k_i} = \barbetamki \trace(\FmRIS \FmRIS^{\dag} )\! -\! \gamma^{\mathtt{I}}_{m k_i}$.
\end{proposition}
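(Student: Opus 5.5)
The plan is to evaluate each term of the effective SINR in \eqref{eq:SINE:general} separately. The main device is to split the true channel as $\gmkiu = \hgmki + \tilgmki$, where the LMMSE estimation error $\tilgmki$ is uncorrelated with (and, by joint Gaussianity, independent of) all estimates $\hat{\qg}^{\mathtt{I}}_{m k'}$. This lets the estimate part use the exact structural properties of the precoders:
\begin{itemize}
\item The ZF precoder gives $(\hgmki)^\dag \Ghmu((\Ghmu)^\dag\Ghmu)^{-1}\qe^{\mathtt{I}}_{k_i'} = \delta_{k_i k_i'}$.
\item The PMRT precoder satisfies $(\hgmki)^\dag \qB_m = \boldsymbol{0}$.
\end{itemize}
One subtlety is pilot sharing. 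When $k_i'\in\mathcal{P}_{k}$, the estimates $\hgmki$ and $\hat{\qg}^{\mathtt{I}}_{m k_i'}$ are built from the same projected observation $\ymk$. Their NLoS parts are then proportional to each other, since both are filtered versions of the same vector, with $\qA_{mk}$ differing only through the scalar $\barbetamk$. I would argue that the ZF inverse then behaves, to the same large-$N$ approximation used in \eqref{eq:16}, as giving unit gain to every co-pilot user. This is the origin of the $\mathrm{PC}_{k_i}$ term.

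\textbf{Desired signal.} With the ZF identity, $\Ex\{(\gmkiu)^\dag\wimk^{\PZF}\} = (\alphaPZFmki)^{-1}\cdot$ scaled by the normalization. Tracking the convention of \eqref{eq:wipzf} and \eqref{eq:alphapzf:approx}, the normalization contributes the factor $\alphaPZFmki$. The error term $\Ex\{(\tilgmki)^\dag \wimk\}$ vanishes by zero mean and independence. Summing over APs gives the numerator $\big(\sum_m \alphaPZFmki\sqrt{a_m\snrdl\etamkI}\big)^2$.

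\textbf{Beamforming uncertainty and inter-user interference.} I would combine $\Ex\{|\BUki|^2\}+\sum_{k_i'\neq k_i}\Ex\{|\IUIki|^2\}$ into $\sum_{k_i'}\Ex\{|\sum_m\sqrt{a_m\snrdl\etamkpI}(\gmkiu)^\dag\wimkp\|^2\} - |\DSki|^2$. Inside each term, the estimate part is deterministic: $\alphaPZFmkip$ if $k_i'\in\mathcal{P}_k$ and $0$ otherwise. So the only randomness is $(\tilgmki)^\dag\wimkp$.

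Cross-AP terms in the error part vanish because the errors at different APs are independent and zero mean. The per-AP term is $\Ex\{|(\tilgmki)^\dag\wimkp|^2\}$. Using independence and $\Ex\{\Vert\wimkp\Vert^2\}=1$, I would bound or approximate it as $\Ex\{\Vert\tilgmki\Vert^2\}/N$ times the trace. More cleanly, following the paper's convention that the error covariance acts isotropically after normalization, it equals $\bar{e}_{m k_i}=\barbetamki\trace(\FmRIS\FmRIS^\dag)-\gamma^{\mathtt{I}}_{m k_i}$, taken from the error norm result in Section~\ref{phase:ULforCE}.

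The deterministic co-pilot parts with $k_i'\neq k_i$ coherently sum over $m$, which yields $\mathrm{PC}_{k_i}$. The $k_i'=k_i$ coherent part cancels against $|\DSki|^2$.

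\textbf{Energy interference.} Since $\hgmki^\dag\qB_m=0$, only the error part contributes: $\Ex\{|(\tilgmki)^\dag\wemj|^2\}=\bar{e}_{m k_i}$ by the same argument, using $\Ex\{\Vert\wemj\Vert^2\}=1$. Cross-AP terms vanish by independence, which gives the second sum in $\Omega_{k_i}$.

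\textbf{Main obstacle.} The delicate step is justifying $\Ex\{|(\tilgmki)^\dag\qw|^2\}=\bar{e}_{m k_i}$ for a unit-average-norm precoder $\qw$ independent of $\tilgmki$. The error covariance $\barbetamki\FmRIS^\dag\FmRIS-\Covhatgmki$ is not white, so the exact value is $\Ex\{\qw^\dag\boldsymbol{C}\qw\}$. I would handle this by invoking the same deterministic-equivalent approximation used for the normalization factors, replacing the quadratic form by its trace. Equivalently, I would treat it as an upper bound on interference, which still yields an achievable SINR.

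The second delicate point is the exact treatment of co-pilot ZF gains. For this I would appeal to the parallel argument in \cite{Hua:WCNC:2024}.
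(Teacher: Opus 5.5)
Your route is the paper's own Appendix~B: split $\gmkiu=\hgmki+\tilgmki$, and let ZF annihilate the estimate part except for co-pilot IRs, which is where the paper places $\mathrm{PC}_{k_i}$. Then use $(\hgmki)^\dag\qB_m=\boldsymbol{0}$ so that only the error part of $\EUIki$ survives, and charge $\bar{e}_{m k_i}$ per AP for each error term. Merging $\BUki$ with the $\IUIki$ terms through $\Ex\{|X-\Ex\{X\}|^2\}=\Ex\{|X|^2\}-|\Ex\{X\}|^2$ only reorganizes the paper's separate computations.

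The two steps you flag are exactly the ones the paper asserts without derivation, and neither of your proposed justifications goes through as stated.

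\textbf{Co-pilot gain.} Your own first bullet gives $(\hgmki)^\dag\wimkp^{\PZF}=0$ for every $k_i'\neq k_i$, co-pilot or not, whenever $(\Ghmu)^\dag\Ghmu$ is invertible. For $\kappa>0$ it generically is, because co-pilot estimates have proportional NLoS parts but different LoS means. For $\kappa=0$ the co-pilot columns are collinear and the precoder as written is undefined. No large-$N$ approximation turns this exact zero into a unit gain. So $\mathrm{PC}_{k_i}\geq 0$ can only be defended as a pessimistic addition to the interference, not derived.

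\textbf{Error term.} A deterministic equivalent with $\Ex\{\qw\qw^\dag\}\approx\qI_N/N$ (the convention the paper itself uses in Appendix~C) gives $\Ex\{|(\tilgmki)^\dag\qw|^2\}\approx\bar{e}_{m k_i}/N$, not $\bar{e}_{m k_i}$. The value $\bar{e}_{m k_i}$ comes only from Cauchy--Schwarz plus independence: $\Ex\{|(\tilgmki)^\dag\qw|^2\}\leq\Ex\{\Vert\tilgmki\Vert^2\}\,\Ex\{\Vert\qw\Vert^2\}=\bar{e}_{m k_i}$. So your two suggested routes are not equivalent, and only the bound reproduces the stated expression.

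Reading both steps as upper bounds on the interference, your argument, like the paper's, establishes the stated SINR as an achievable lower bound rather than an identity.
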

\begin{proof}
See Appendix~\ref{appendix:B}.
\end{proof}

\vspace{-2em}
\subsection{Average Non-Linear Harvested Energy with PPZF}
We use the NLEH model~\cite{Hua:WCNC:2024,cite:HEModel:Boshkova,Mohammadi:TC:2024} to characterize the HE at ER $k_e$.
Subsequently, the total HE at ER $k_e \in \Kee$ is given by $\mathcal{E}^{\mathrm{NL}}_{k_e} \triangleq \big( \Lambda\big(\mathcal{E}_{k_e}\big) - \phi \Omega \big)(1-\Omega)^{-1}$, where $\phi$ is the maximum output DC power, $\Omega\triangleq1/(1 + \exp(\xi \chi) )$ is a constant to guarantee a zero input/output response, $\xi$ and $ \chi$ are circuit constants, and $\mathcal{E}_{k_e}$ denotes the received RF energy at ER $k_e$, $\forall k_e\in\Kee$. We obtain the energy at the output of the circuit as a logistic function of the received input RF signal at ER $k_e$, i.e. $\Lambda(\mathcal{E}_{k_e}) = \phi \big(1 + \exp\big(\!-\xi\big(\mathcal{E}_{k_e}\!-\! \chi\big)\!\big) \big)^{-1}$.
To this end, the average of the total HE at ER~$k_e$ is formulated as
\begin{align}~\label{eq:NLEH:av}
    \Ex\big\{\mathcal{E}^{\mathrm{NL}}_{k_e}\big\} 
    &= \frac{\Ex\big\{\Lambda\big(\mathcal{E}_{k_e}\big)\big\} - \phi \Omega }{1-\Omega}
    \approx
    \frac{\Lambda\big(\mathcal{Q}_{k_e}\big) - \phi \Omega }{1-\Omega}, 
\end{align}
where
\begin{align}~\label{eq:El_average}
    &\mathcal{Q}_{k_e}\! \!\triangleq\!
    (\tau_c\!-\!\tau)\Snn \snrdl
    \nonumber\\
    &\hspace{0.8em}\times\!\!
    \bigg(\!
    \sum\nolimits_{m\in\M}\!\!
    (1\!-\!a_m)\etamjE\! \Ex\Big\{\left\vert\!\gejueH \!\wemj^{\!\PMRT}\right\vert^2\!\Big\} 
    \nonumber\\
    &\hspace{0.8em}
    +
    \!
    \sum\nolimits_{k'_e \in\Kee \! \setminus k_e}\!
    \sum\nolimits_{m\in\M} \!\!
    {(1\!-\!a_m)\etamjpE} \Ex\Big\{\! \left\vert\gejueH\wemjp^{\PMRT}\right\vert^2\!\Big\}
    \nonumber\\
    &\hspace{0.8em} 
    +\!\!
    \sum\nolimits_{k_i\in\Kii}\!
    \sum\nolimits_{m\in\M }\!\!\!
    {a_m \etamkI}\Ex\Big\{\!\left\vert\big(\gejue\big)^{\!\dag}\wimk^{\PZF}\right\vert^2\!\Big\} 
    \!+\! \frac{1}{\snrdl}
    \bigg).
\end{align}
\begin{Remark}
    The approximation $\Ex\left\{\Lambda\left(\mathcal{E}_{k_e}\right)\right\} \approx\Lambda\left(\Ex\left\{\mathcal{E}_{k_e}\right\} \right)= \Lambda\left(Q_{k_e}\right)$ is applied on (27). The approximation relies on a first-order Taylor expansion of the NLEH logistic function around the mean received energy $\mathcal{Q}_{k_e}$, neglecting higher-order terms. This approximation is tight when the fluctuation of the received energy $\mathcal{E}_{k_e}$ is small \cite{Mohammadi:TC:2024}. In addition, the $\Lambda(\cdot)$ function has convexity and concavity ends of the logistic curve, which represents the low-power and saturation regimes, respectively. In extreme cases, the approximation may strictly serve as a lower bound (in the low-power regime) or an upper bound (in the saturation regime) for the actual average harvested energy.
\end{Remark}

\begin{proposition}~\label{Theorem:RF:MRT}
With PMRT precoding, the average HE at  ER $k_e \in \Kee$ is given by~\eqref{eq:NLEH:av}, where $\mathcal{Q}_{k_e}$ is given by
\begin{align}~\label{eq:El_average:PMRT}
    \mathcal{Q}_{k_e} &=
    (\tau_c - \tau)\Snn \snrdl
    \Big[ 1/ \snrdl +
    \sum\nolimits_{m \in \M} 
    (1-a_m) \etamjE  \bar{d}_{m k_e}
    \nonumber\\
    &\hspace{0em}
    \!+\!
    \sum\nolimits_{m \in \M} \!
    \sum\nolimits_{k'_e \in \mathcal{P}_k \setminus k_e} \!\!
    (1-a_m) \etamjpE  
    \bar{e}_{m, k_e k'_e}
    \nonumber\\
    &\hspace{0em}
    \!+\!
    \sum\nolimits_{m \in \M} \!
    \sum\nolimits_{k'_e \notin \mathcal{P}_k \setminus k_e}
    (1-a_m) \etamjpE \bar{f}_{m, k_e k'_e}
    \!\! 
    \nonumber\\
    &\hspace{0em}
    \!+\!
    \sum\nolimits_{m \in \M} \!
    \sum\nolimits_{k_i \in \Kii}
    a_m \etamkI (1/N) \bar{h}_{m, k_e}
    \Big],
\end{align}
while 
\vspace{-0.4em}
\begin{align}
    \bar{d}_{m k_e}
    &=
    \big(\alphaPMRTmke\big)^{\!2}
    \Big[
    \Big(
    \barbetamke \trace\big( \FmRIS \FmRIS^{\dag} \big) - \gameumj
    \Big)
    \nonumber\\
    &\hspace{0em} \times 
    \Big( 
    \kappa \barbetamke \trace\big(\FmRIS \FmRIS^{\dag} \zmkelos (\zmkelos)^{\dag}  \big) 
    \gameumj
    \Big)
    \!+\!
    \bar{a}_{m, k_e}
    \Big], 
    \nonumber\\
    \bar{e}_{m, k_e k'_e} &=
    \big(\alphaPMRTmke\big)^{\!2} 
    \nonumber\\
    &\hspace{0em}\times
    \Big[
    \bar{b}_{m, k_e k'_e}
    \!+\!
    \Big( 
    \kappa \barbetamkep \trace\big(\FmRIS \FmRIS^{\dag} \zmkeplos (\zmkeplos)^{\dag}  \big) 
    +
    \gameumjp
    \Big)
    \nonumber\\
    &\hspace{0em}\times
    \Big(
    \barbetamke \trace\big( \FmRIS \FmRIS^{\dag} \big) - \gameumj
    \Big)
    \Big], \forall k'_e \in \mathcal{P}_{k} \!\setminus\! k_e, 
    \nonumber\\
    \bar{f}_{m, k_e k'_e}
    &=
    \big(\alphaPMRTmke\big)^{\!2}
    \nonumber\\
    &\times
    \Big[
    \Big(\!
    \kappa \barbetamke \trace\big(\FmRIS \FmRIS^{\dag} \zmkelos (\zmkelos)^{\dag} \big)
    \!+\! \barbetamke \trace\big(\FmRIS \FmRIS^{\dag} \big)
    \Big)
    \nonumber\\
    &\hspace{0em} \times
    \Big(
    \kappa \barbetamkep \trace\big(\FmRIS \FmRIS^{\dag} \zmkeplos (\zmkeplos)^{\dag}  \big)
    + 
    \gameumjp
    \Big)
    \Big], 
    \nonumber\\
    &\hspace{0.5em}
    \forall k'_e \notin \mathcal{P}_{k} \!\setminus\! k_e,
    \text{ and}
    \nonumber\\
    \bar{h}_{m, k_e}
    &= 
    \big(
    \kappa \barbetamke \trace\big( \FmRIS \FmRIS^{\dag} \zmkelos (\zmkelos)^{\dag}  \big)
    \!+\! \barbetamke \trace\big( \FmRIS \FmRIS^{\dag} \big)
    \big),
    \nonumber
\end{align}
where
\vspace{-0.3em}
\begin{align*}
    \bar{a}_{m k_e} 
    &=\!
    \big\vert \trace\big(\Covhatgmke \big) \big\vert^{2}\! +\! \big\vert \trace\big(\Covhatgmke \Covhatgmke \big) \big\vert
    \!+\!
    \big\vert (\gmkebar)^\dag \gmkebar \big\vert^2
    \nonumber\\
    &+
    2 (\gmkebar)^\dag \gmkebar \trace\big(\Covhatgmke \big)
    +
    2 (\gmkebar)^\dag \Covhatgmke \gmkebar,
    \\
        \bar{b}_{m, k_e k'_e} &=
        \vert (\gmkebar)^{\dag} \gmkepbar \vert^2 
        \nonumber\\
        &\hspace{-2em}+
        2 \big( (\barbetamkep)^2 / (\barbetamke)^2 \big)(\gmkebar)^{\dag} \gmkpbar \trace\big(\Covhatgmk \big)
        \nonumber\\
        &\hspace{-2em}+
        \big( (\barbetamkep)^4 / (\barbetamke)^4 \big) \Big( \big\vert \trace\big(  \Covhatgmke \big) \big\vert^{2} 
        + \trace\big(\Covhatgmke \Covhatgmke \big) \Big)
        \nonumber\\
        &\hspace{-2em}+
        2 \big((\barbetamkep)^2 / (\barbetamke)^2 \big)(\gmkebar)^{\dag} \Covhatgmke \gmkepbar, \forall m \in \M.       
    \end{align*}
\begin{proof}
See Appendix~\ref{appendix:C}.
\end{proof}
\end{proposition}

\subsection{Problem Formulation}~\label{ref:ProblemFormulation}
We formulate an optimization problem that maximizes the average sum-HE, subject to the minimum HE QoS $\Gamma_{k_e}$, the minimum SE QoS $\mathcal{S}_{k_i}$, and the power budget for each AP.
Our objective is to optimally determine the AP mode selection $\qa$, power control coefficients, $\ETAI = \big\{\etamkI \big\}, \forall m \in \M, \forall k_i \in \Kii$ and $ \ETAE = \big\{\etamjE \big\}, \forall m \in \M, k_e \in \Kee$, and the PS matrices of $M$ SIM-attached APs $\boldsymbol{\Theta} \triangleq \{ \Phi^{m, \ell}, \forall m \in \M, \ell \in \mathcal{L} \}$. The optimization problem can be mathematically expressed as
\begin{subequations}\label{eq:ProblemFormulationorigin}
\begin{alignat}{2}
&\max_{\qa, \ETAI, \ETAE, \boldsymbol{\Theta} }      
&\qquad&\sum\nolimits_{k_e \in \Kee}{\Ex\big\{\mathcal{E}^{\mathrm{NL}}_{k_e}\big\}}~\label{obj:ObjectiveFunction1}\\
&\hspace{1.8em} \text{s.t.}
&        & \Ex\big\{\mathcal{E} ^{\mathrm{NL}}_{k_e}\big\} \geq \Gamma_{k_e}, \forall k_e \in \Kee,~\label{ct:HE_threshold}\\
&         &      &\text{SE}_{k_i} \geq \mathcal{S}_{k_i}, \forall k_i \in \Kii,~\label{ct:SINRThreshol}\\
&         &      &a_m\sum\nolimits_{k_i\in\Kii} \!\etamkI\! 
        \nonumber\\
&         &        &+\! (1\!-\!a_m) \sum\nolimits_{k_e\in\Kee}\!\etamjE\!\leq\! 1, \forall m \!\in\! \M,~\label{ct:etamkI}\\
&         &      &a_m \in \{0,1\},\forall m \in \M,~\label{ct:a_m}\\
&         &      &\theta^{\ell}_{s} \in [0, 2 \pi], \forall s \in \mathcal{S},~\label{ct:SIM:phaseshift}
\end{alignat}
\end{subequations}
where $ \Gamma_{k_e}$ and $\mathcal{T}_{k_i}$ are the QoS requirements for WPT and WIT operations, respectively, while \eqref{ct:etamkI} is the power constraint at each AP, obtained from~\eqref{eq:transmitpowerconstraint}. Problem~\eqref{eq:ProblemFormulationorigin} is an immensely complex problem due to its mixed-integer and non-convex nature. To solve this problem, we first present an autonomous DRL-based solution, followed by a robust two-stage optimization approach that leverages structured convex approximations.

\section{CTDE MADRL Strategy}~\label{sec:CTDE}
In this section, we propose a scalable and autonomous learning-based framework to solve the non-convex optimization problem defined in $(\mathcal{P}1)$. While DRL-based approaches have shown promise in solving mixed-integer problems without requiring manual convexification, the CTCE strategy faces a critical scalability bottleneck in our proposed system. In particular, the joint optimization of SIMs' phase shifts and power allocation creates a global action space with a dimension of $M \!\times\! (L \!\times\! S \!+\! K \!+\!1)$, which grows exponentially with the number of APs $(M)$. The high-dimensional action space complicates the learning process and impedes convergence. To address this, we decentralize the action space across the APs, thereby reducing the dimensionality for each individual actor to a more manageable local action space of size $(L\!\times\! S \!+\! K\!+\!1)$. This distributed learning approach removes the dimensionality bottleneck and facilitates effective convergence in large-scale networks. To further enhance learning efficiency and guarantee that the agents produce feasible outputs, we introduce two crucial mechanisms: a constraint-satisfied normalization process to handle the mixed-integer and bounded constraints $(30d)$ and $(30f)$, and max-min reward scaling to guide the agents to simultaneously maximize $(30)$ and satisfy $(30a),(30b)$. As a result, these techniques enable the CTDE framework to stabilize the learning process, then instantaneously determine optimal and feasible solutions in dynamic large-scale CF-mMIMO SWIPT systems.

\vspace{-0.8em}
\subsection{Markovian-Based Problem Transformation}~\label{sec:markovian}
We begin by transforming the system proposed in Section~\ref{sec:Sysmodel} into a Markovian environment tailored for agents, which are the $M$ resource-equipped APs. Let $t \in \mathcal{T}$ denote the time step of the learning interval. At each time step $t$, agent $m$ observes its local LSFC to all the receivers, which can be mathematically presented as
\begin{align}~\label{eq:MLstate}
    \localstatem[t] \triangleq \big[\Sigma^{\mathrm{NL}}[t\!-\!1], \BETA(m,:)[t] \big],
\end{align}
where $\Sigma^{\mathrm{NL}}[t\!-\!1] \triangleq \sum\nolimits_{k_e \in \Kee}{\Ex\big\{\mathcal{E}^{\mathrm{NL}}_{k_e}\big\}}$ is the sum-HE metric of the previous step and $\BETA(m,:)\triangleq \{ \barbetamk \}, \forall m\in \M$ is the LSFC observed by agent $m$.
Then, each agent $m$ determines its local joint action containing its operation mode $a_m$, power allocation $\ETAI(m,:)\!\triangleq\!\{ \etamkI, \forall k_i \in \Kii \}$, and $\ETAE(m,:)\!\triangleq\!\{ \etamjE, \forall k_e \in \Kee \}$, and $L \times S$ SIM's PS matrix $\boldsymbol{\Phi}^{m} \triangleq \{ \delta_{s}^{\ell}, \forall \ell \in \mathcal{L}, \forall s \in \mathcal{S} \}$.
However, the ranges of values of the optimization variables are distinct compared to each other. Thus, an execution normalization process should be carried out appropriately.

\begin{table*}[ht]
\centering
\caption{Complexity comparison across DRL training components}
\vspace{-0.5em}
\begin{tabular}{|c|c|c|c|c|}
\hline
\textbf{Training Components} & \textbf{Off-policy MADRL} & \textbf{Off-policy SADRL} & \textbf{On-policy MADRL} & \textbf{On-policy SADRL} \\
\hline
Total Neurons 
& $M \cdot P_{A,\mathrm{MA}} + P_{C,\mathrm{MA}}$ 
& $P_{A,\mathrm{SA}} + P_{C,\mathrm{SA}}$ 
& $M \cdot P_{A,\mathrm{MA}} + P_{C,\mathrm{MA}}$ 
& $P_{A,\mathrm{SA}} + P_{C,\mathrm{SA}}$ \\
\hline

Actor Forward Pass
& $\mathcal{O}\left(M \cdot L_{A,\mathrm{MA}} \cdot d_{lo} \cdot d_{la}\right)$ 
& $\mathcal{O}\left(L_{A,\mathrm{SA}} \cdot d_{go} \cdot d_{ga}\right)$ 
& $\mathcal{O}\left(M \cdot L_{A,\mathrm{MA}} \cdot d_{lo} \cdot d_{la}\right)$ 
& $\mathcal{O}\left(L_{A,\mathrm{SA}} \cdot d_{go} \cdot d_{ga}\right)$ \\
\hline

Critic Forward Pass 
& $\mathcal{O}\left(L_{C,\mathrm{MA}} \cdot (d_{go} + d_{ga})^2\right)$ 
& $\mathcal{O}\left(L_{C,\mathrm{SA}} \cdot (d_{go} + d_{ga})^2\right)$ 
& $\mathcal{O}\left(L_{C,\mathrm{MA}} \cdot (d_{go} + d_{ga})^2\right)$ 
& $\mathcal{O}\left(L_{C,\mathrm{SA}} \cdot (d_{go} + d_{ga})^2\right)$ \\
\hline

Critic Propagation  
& $\mathcal{O}\left(P_{C,\mathrm{MA}}\right)$ 
& $\mathcal{O}\left(P_{C,\mathrm{SA}}\right)$ 
& $\mathcal{O}\left(P_{C,\mathrm{MA}}\right)$ 
& $\mathcal{O}\left(P_{C,\mathrm{SA}}\right)$ \\
\hline

Actor Propagation  
& $\mathcal{O}\left(M \cdot P_{A,\mathrm{MA}}\right)$ 
& $\mathcal{O}\left(P_{A,\mathrm{SA}}\right)$ 
& $\mathcal{O}\left(M \cdot P_{A,\mathrm{MA}}\right)$ 
& $\mathcal{O}\left(P_{A,\mathrm{SA}}\right)$ \\
\hline

Replay Buffer Sampling 
& $\mathcal{O}\left(B_e \cdot (d_{go} + d_{ga})\right)$ 
& $\mathcal{O}\left(B_e \cdot (d_{go} + d_{ga})\right)$ 
& 0 
& 0 \\
\hline

Target Network Update  
& $\mathcal{O}\left(P_{C,\mathrm{MA}} + M \cdot P_{A,\mathrm{MA}}\right)$ 
& $\mathcal{O}\left(P_{C,\mathrm{SA}} + P_{A,\mathrm{SA}}\right)$ 
& 0 
& 0 \\
\hline

Advantage Estimation  
& 0 
& 0 
& $\mathcal{O}\left(T \cdot d_{go}\right)$ 
& $\mathcal{O}\left(T \cdot d_{go}\right)$ \\
\hline

PPO Clipping + Ratio 
& 0 
& 0 
& $\mathcal{O}\left(T \cdot d_{la}\right)$ 
& $\mathcal{O}\left(T \cdot d_{ga}\right)$ \\
\hline
\makecell[c]{Optimal decision-making\\post training} 
& $\mathcal{O}\left( L_{A,\mathrm{MA}} \cdot d_{lo} \cdot d_{la} \right)$
& $\mathcal{O}\left( L_{A,\mathrm{SA}} \cdot d_{go} \cdot d_{ga} \right)$ 
& $\mathcal{O}\left( L_{A,\mathrm{MA}} \cdot d_{lo} \cdot d_{la} \right)$
& $\mathcal{O}\left( L_{A,\mathrm{SA}} \cdot d_{go} \cdot d_{ga} \right)$ 
\\\hline
\end{tabular}
\begin{tablenotes}\footnotesize
\item Abbreviations: actor (A), batch size (\text{$B_e$}), critic (C), global/local action (ga/la), global/local observation (go/lo), multi/single-agent (MA/SA)
\end{tablenotes}
\vspace{-1em}
\label{tab:complexity_comparison}
\end{table*}

\vspace{-0.5em}
\subsection{Constraint-Satisfied Normalization Process}~\label{sec:normalization}
We first introduce ${\rawlocalactionm}[t] \in \R^{1 + K + L \times S} \in [0,1]$ output after activation process from agent $m$, given by
\begin{align}~\label{eq:LocalAction}
    \rawlocalactionm[t] \triangleq \vecOp[\breve{a}_m, \boldsymbol{\varrho}_{m}^{\mathtt{I}}, \boldsymbol{\varrho}_{m}^{\mathtt{E}}, \boldsymbol{\Delta}_{m} ],
\end{align}
where $\boldsymbol{\varrho}_{m}^{\mathtt{I}} \triangleq \{ \varrhomkI, \forall k_i \in \Kii \}$, $\boldsymbol{\varrho}_{m}^{\mathtt{E}} \triangleq \{ \varrhomjE, \forall k_e \in \Kee \}$,
$\boldsymbol{\Delta}_{m} \triangleq \{ \delta_{s}^{\ell}, \forall \ell \in \mathcal{L}, \forall s \in \mathcal{S}  \}$. We first apply a step function to discretize the continuous variable $\breve{a}_m$ into a binary decision, where $a_m = 1$ if $\breve{a}_m \geq 0.5$.
Then, we implement a normalization step $\theta^{\ell}_{s} \triangleq 2 \pi \delta_{s}^{\ell}$, thus the constraint~\eqref{ct:SIM:phaseshift} is satisfied. In addition, in order to satisfy the constraint \eqref{ct:etamkI}, we implement $\textit{softmax}$ normalization for each power allocation of AP $m$. Mathematically speaking, 
\begin{align}
    \etamkI &\!=\! \exp( \varrhomkI) \exp\big( \sum\nolimits_{k'_i \in K_\mathtt{I}}{\varrhomkpI} \big)^{-1}, \text{ and},
    \nonumber\\
    \etamjE &\!=\! \exp( \varrhomjE) \exp\!\big(\! \sum\nolimits_{k'_e \in K_\mathtt{E}}\!{\varrhomjpE} \big)^{\!-1}, \forall m \!\in\! \M.
\end{align} 
Eventually, the \textit{constraint-satisfied} action of the agent $m$ is mathematically expressed as
\begin{align}
    \localactionm[t] \triangleq \vecOp\big[ a_{m}, \ETAI(m,:), \ETAE(m,:), \boldsymbol{\Phi}^{m}  \big].
\end{align}
\subsection{Normalized Joint Reward Function}
\vspace{-0.2em}
Building upon the closed-form expressions derived in Proposition~\ref{Theorem:SE:PPZF} and~\ref{Theorem:RF:MRT}, we observe a direct and constructive influence of $\FmRIS, \forall m \in \M$ on the overall SWIPT performance. Particularly, the Frobenius norm of the equivalent SIM channel, $\Vert\FmRIS\Vert_{\mathrm{F}}$ plays a critical role in enhancing both signal focusing and energy delivery. Thus, we incorporate the Frobenius norm of the SIM-assisted EM wave propagation, $\Vert\FmRIS\Vert_{\mathrm{F}}$, into the reward structure alongside with the maximum objective and QoS constraints in~\eqref{eq:ProblemFormulationorigin}. Mathematically speaking
\begin{align}~\label{eq:MLreward}
    r[t] \triangleq \lambda_{r} \tilde{\Delta}_{\mathrm{HE}}[t] + (1-\lambda_{r}) \tilde{\mathsf{Q}}[t] - \lambda_{\mathrm{SE}},
\end{align}
where $\Delta_{\mathrm{HE}}[t] \triangleq \Sigma^{\mathrm{NL}}[t] -  \Sigma^{\mathrm{NL}}[t\!-\!1]$, 
$\mathsf{Q}[t] \triangleq \sum\nolimits_{m \in \M}{\Vert \FmRIS \Vert_{\mathrm{F}}}$,
$\lambda_{r}\in[0,1]$ is the trade-off parameter to balancing the WPT improvement against SIMs' EM wave beamforming, $\lambda_{\mathrm{SE}}$ is a penalty for violating~\eqref{ct:SINRThreshol}, $\tilde{\Delta}_{\mathrm{HE}}[t]$ and $\tilde{\mathsf{Q}}[t]$ are the normalized values using dynamic min-max tracking. To handle scaling disparity between two distinct values, we update the normalization bounds at each $t$ using:
\begin{align}~\label{eq:maxmin_scaling}
    \tilde{\Delta}_{\mathrm{HE}}[t] \!=\! \frac{\Delta_{\mathrm{HE}}[t] \!-\! \min{\Delta_{\mathrm{HE}}}}{\max{\Delta_{\mathrm{HE}}} \!-\! \min{\Delta_{\mathrm{HE}}}}
    ,
    \tilde{\mathsf{Q}}[t] \!\!=\!\! \frac{\mathsf{Q}[t] - \min{\mathsf{Q}}}{\max{\mathsf{Q}} - \min{\mathsf{Q}}}.
\end{align}
To prevent bias values and changing scales, the min-max thresholds are tracked from the previous step, i.e., $\min{x}[t]\triangleq\min(\min{x}[t\!-\!1], \min{x})$ and $\max{x}[t]\triangleq\min(\max{x}[t\!-\!1], \max{x})$,
where $x\in\{\Delta_{\mathrm{HE}},\mathsf{Q} \}$. To this end, the normalization joint reward function serves a dual purpose: it stabilizes the learning pattern for improved centralized learning convergence and encourages each agent to adaptively execute $\boldsymbol{\Theta}$ to enhance SWIPT metrics over the learning time.
\begin{Remark}
    While any RL environment must satisfy the Markovian property, an inefficient designed environment may not be equally effective for both off-policy and on-policy trainings, as numerically shown in~\cite[Section IV.C]{cite:thien_iotj_2023}. 
    However, as multiple variables in the reward~\eqref{eq:MLreward} are normalized within predefined ranges, the resulting reward signal becomes more stable across episodes, contributing to reduced variance in policy gradient updates and improved convergence behavior. Consequently, our proposed Markovian environment effectively supports long-term learning and stability within any centralized training with the CTDE framework. In this framework, each of the $M$ agents independently observes its local state~\eqref{eq:MLstate} and selects its action~\eqref{eq:LocalAction} in a decentralized manner. During training, a centralized critic leverages global information, including the joint state and joint actions of all agents. This design enables agents to cooperatively learn policies that maximize a global reward, while respecting constraints in~\eqref{eq:ProblemFormulationorigin}.
\end{Remark}
 
\subsection{Complexity Analysis}
In this {subsection}, we analyze and compare the model complexity of on-policy and off-policy \textbf{MADRL} and single-agent \textbf{SADRL}. For off-policy methods, the replay buffer stores experiences $E \triangleq [\qs[t], \qa[t], r[t], \qs[t+1]]$ with a buffer length $B$ and batch size $B_e$. We denote $d_{\mathrm{lo}}\triangleq 1+K$, $d_{\mathrm{go}}\triangleq 1+MK$, $d_{\mathrm{la}}\triangleq 1+K+LS$, $d_{\mathrm{ga}}\triangleq M d_{\mathrm{la}}$, the dimensions of the global observation, local observation, global action, and local action, respectively. Utilizing fully-connected DNNs, we denote by $P_{A, \cdot}$ and $P_{C, \cdot}$ the number of neurons in the actor and critic networks, respectively. Table~\ref{tab:complexity_comparison} provides a comprehensive complexity comparison across all DRL frameworks.

\section{Two-Phrase Robust Joint AP Mode Selection and Power Control Design}~\label{sec:jap-pa}
In this section, we introduce a robust optimization scheme based on the convex optimization method to serve as a comparison benchmark for evaluating our proposed CTDE autonomous learning framework. We selected the combined heuristic and SCA approach for its tractability, strong theoretical convergence guarantees, and seamless integration with standard convex optimization solvers like CVX and MOSEK. However, applying them necessitates extensive convex reformulations and approximations, which results in high computational complexity, especially as the network scales. Therefore, a direct comparison is essential to evaluate the performance-complexity trade-off, assessing whether the significant reduction in computational overhead offered by the MADRL-based CTDE framework justifies any potential performance gap against the convex-based solution. We provide a step-by-step procedure to convexify the constraints in~\eqref{eq:ProblemFormulationorigin}, and a comprehensive numerical comparison between the convex-based solution and the DRL-based approach is presented in Section~\ref{sec:vb}.

\subsection{Heuristic Phase Shift Design (\textbf{HPS})}~\label{subsec:heu}
From the closed-form in Proposition~2, we notice that the SIM's PS directly affects the trace value of the SIM-affected channel with its Hermitian, which serves as a dominant term in the expression. Thus, we heuristically optimize the PSs by solving the problem:
\vspace{-0.5em}
\begin{equation}~\label{eq:heu:obj}
    \max_{\boldsymbol{\Theta}^{m}}{f(\boldsymbol{\Theta}) = \trace(\FmRIS \FmRIS^{\dag})}.
\end{equation}

\textbf{Algorithm}~\ref{alg:layerbylayerSIM} provides the layer-by-layer heuristic scheme while the complexity of the heuristic scheme is $\mathcal{O}(MCSL^{2} )$. Notably, the effectiveness of the heuristic search is proportional to the heuristic iteration $C$, as a higher $C$ allows for extensive explorations of the solution space. Thus, $C$ serves as a trade-off parameter between computational efficiency and optimization performance.

\vspace{-0.5em}
\subsection{Joint AP-mode Selection and Power Allocation (\textbf{JAPPA})}~\label{subsec:ppa}
Once the SIM's PS are optimally determined, the joint AP-mode selection and power allocation problem is then solved using a SCA method.
For subsequent convexify steps, we first reformulate the constraint~\eqref{ct:HE_threshold}. To expand the logistic function $\Lambda\big(\mathcal{Q}_{k_e}\big)$, we introduce $\mathcal{Q}_{k_e} \geq \Xi(\tilde{\Gamma}_{k_e})$, where $\tilde{\Gamma}_{k_e} = (1-\Omega) \Gamma_{k_e} + \phi\Omega$, where $\Xi(\tilde{\Gamma}_{k_e}) = \chi - \frac{1}{\xi}\Big( \frac{\phi -\tilde{\Gamma}_{k_e}}{\tilde{\Gamma}_{k_e}} \Big)$,
which is the inverse function of the logistic function. Then, we define $\boldsymbol{\epsilon} =\{\epsilon_{k_e}\geq 0, k_e\in\Kee\}$ where $\epsilon_{k_e}$ is an auxiliary variable for $\tilde{\Gamma}_{k_e}$, we obtain: $\tilde{\epsilon}_{k_e} \triangleq (1 - \Omega) \epsilon_{k_e} + \phi \Omega$. Subsequently, we transform the binary constraint~\eqref{ct:a_m} into continuous counterpart~\cite{Mohamed_projection_2024}
\vspace{-0.5em}
\begin{align}~\label{eq:am:continuous}
    a_m \in \{0,1\} \equiv a_m - a_m^2 \geq 0,  a_m \in [0,1].
\end{align}
To accelerate the convergence speed, we relax \eqref{eq:am:continuous} into the objective, and then transform the non-convex $a_m^2$. Herein, we use the concave lower bound, $ x^2 \geq x_0(2x - x_0)$, in which we respectively replace $a_m^2, a_m^{(n)}, a_m$ by $x^2, x_0, x$, and superscript ($n$) denotes the value of the involving variable produced after $(n - 1)$ iterations ($n \geq 1$)~\cite{Mohammadi:JSAC:2023}. Given fixed $\boldsymbol{\Theta}$ configuration after the first heuristic phrase, we reformulate~\eqref{eq:ProblemFormulationorigin} as
\vspace{-0.5em}
\begin{subequations}\label{eq:ProblemFormulationP2}
\begin{alignat}{2}
&\max_{\qa, \ETAI, \ETAE, \qe}      
&\hspace{0.5em}&\sum\nolimits_{k_e\in\Kee}{\epsilon_{k_e}} \nonumber\\
&         &      &- \lambda^{\mathrm{pen}}\sum\nolimits_{m\in\M}{a_m - a_{m}^{(n)} \big(2a_m - a_{m}^{(n)}\big)}\label{obj:ObjectiveFunction2}\\
&\hspace{1.5em}\text{s.t.} 
&         & Q_{k_e} \geq \Xi(\tilde{e}_{k_e}),~\forall k_e \in \Kee,~\label{ct:auxilaryvariable2a}\\
&         &      &\text{SE}_{k_i} \geq \mathcal{S}_{k_i}, \forall k_i \in \Kii,~\label{ct:SINRThreshol31}\\
&         &      &\epsilon_{k_e} \geq \Gamma_{k_e}, \forall k_e \in \Kee,~\label{ct:auxilaryvariable2b2}\\
&         &      &\sum\nolimits_{k_i\in\Kii}\!\!{\etamkI}\! \leq a_{m}^{(n)} \!\big(2a_m \!-\! a_{m}^{(n)}\big), \forall m \in \M,~\label{ct:etamkI22}\\
&         &      &\sum\nolimits_{k_e\in\Kee}{\etamjE} \leq 1 - a_{m}^{2}, \forall m \in \M,~\label{ct:etamjE22}\\
&         &      &0 \leq a_m \leq 1,\forall m \in \M.~\label{ct:a_m2relax2}
\end{alignat}
\end{subequations}

\begin{algorithm}[t]
\caption{Layer-by-layer heuristic search for $L$-layer SIM}~\label{alg:layerbylayerSIM}
    \begin{algorithmic}[1]
    \small
        \STATE Initiate SIM with stochastic ${\Phi}^{m, \ell}$ in range of $[0, 2\pi)$, $\forall \ell \in \mathcal{L}, \forall m \in \M$
        \STATE \textbf{for} $m = 1$ \textbf{to} $M$:
        \STATE \quad \textbf{for} $\ell = 1$ \textbf{to} $L$:
        \STATE \quad \quad \textbf{for} $\mathtt{it} = 1$ \textbf{to} $C$:
            \STATE  \quad \quad \quad 
            Generate $\boldsymbol{\Phi}^{m \ell}$ and compute $f(\mathtt{it})$ according to~\eqref{eq:heu:obj}
            \STATE  \quad \quad \quad 
            Store $\boldsymbol{\Phi}^{m \ell}(\mathtt{it}), \mathrm{obj}(\mathtt{it})$ and its index
            \STATE \quad \quad 
            Search the maximum $\mathrm{obj}(\mathtt{it}^*)$ value and its index $\mathtt{it}^*$
            \STATE \quad \quad
            Assign $\mathtt{it}^*$-th $\boldsymbol{\Phi}^{m \ell}$ as optimal $(\boldsymbol{\Phi}^{m \ell})^{*}$
            \STATE \quad \quad Move to $\ell + 1$ layer until the last layer
            \STATE \quad  Move to $m + 1$ AP until the last AP
        \RETURN $(\boldsymbol{\Theta}^{m})^{*} \triangleq \{ (\boldsymbol{\Phi}^{m \ell})^{*}, \forall m \in \M, \ell \in \mathcal{L} \}$.
    \end{algorithmic}
\end{algorithm}
\setlength{\textfloatsep}{0.3cm}

We first notice that the non-convex nature of~\eqref{ct:auxilaryvariable2a} is due to the product of the two variables, i.e. $a_m \etamkI$ and $a_m \etamjE$, and the inverse function $\Xi(\tilde{\Gamma}_{k_e})$. To address the non-convex terms, we use the upper bound of $\Xi(\tilde{\epsilon}_{k_e})$ as
\vspace{-0.5em}
\begin{align}~\label{eq:CvxUBINV}
    \Xi(\tilde{\epsilon}_{k_e}) 
    &\!\leq\!
    \chi 
    \!-\! 
    \frac{1}{\xi} \bigg(\! \ln\Big(\frac{\phi - \tilde{\epsilon}_{k_e}}{\tilde{\epsilon}_{k_e}^{(n)}} \Big)
    \!-\! 
    \frac{\tilde{\epsilon}_{k_e} - \tilde{\epsilon}_{k_e}^{(n)}}{\tilde{\epsilon}_{k_e}^{(n)}} \bigg)\triangleq \tilde{\Xi}(\tilde{\epsilon}_{k_e}),
\end{align}
where $\tilde{\epsilon}_{k_e}^{(n)}$ is the $(n-1)$-th iteration of $\tilde{\epsilon}_{k_e}$
To this end, we recast the constraint~\eqref{ct:auxilaryvariable2a} as
\vspace{0em}
\begin{align}~\label{eq:44}
    &(\tau_c - \tau) \Snn \snrdl 
    \sum\nolimits_{m \in \M} 
    \!\Big(
    \etamjE \!\bar{d}_{m k_e} 
    \!+\!
    \sum\nolimits_{k'_e \in \mathcal{P}_k \setminus k_e}
    {\etamjpE \bar{e}_{m, k_e k'_e} }
    \nonumber\\
    &\hspace{-0.3em}+\!
    \sum\nolimits_{k'_e \notin \mathcal{P}_k \setminus k_e}
    {\!\etamjpE \bar{f}_{m, k_e k'_e} }
    \Big)
    \!+\!
    (\tau_c - \tau) \Snn \snrdl 
    \sum\nolimits_{m \in \M} \! a_m \mathcal{Z}_{m k_e} 
    \nonumber\\
    &\hspace{-0.3em}+
    (\tau_c - \tau) \Snn
    \geq
    \tilde{\Xi}(\tilde{\epsilon}_{k_e}),
\end{align}
where
\begin{align}
    &\hspace{-0.3em}\mathcal{Z}_{m k_e} = 
    \sum\nolimits_{k_i \in \Kii }
    \etamkI (1/N) \bar{h}_{m, k_e}
    - \etamjE \bar{d}_{m, k_e} 
    \nonumber\\
    &\hspace{-0.5em}- 
    \sum\nolimits_{k'_e \in \mathcal{P}_k \setminus k_e}
    {\etamjpE \bar{e}_{m, k_e k'_e} }
    -
    \sum\nolimits_{k'_e \notin \mathcal{P}_k \setminus k_e}
    {\etamjpE \bar{f}_{m, k_e k'_e} }
    .
\end{align}
Using~$4xy = (x+y)^2 - (x-y)^2$, we further recast~\eqref{eq:44} as
\begin{align}~\label{eq:46}
    &\hspace{-0.3em}4(\tau_c \!-\! \tau) \Snn \snrdl 
    \!\sum\nolimits_{m \in \M} 
    \!\!\Big(
    \etamjE \bar{d}_{m k_e} 
    \!\!+\!\!
    \sum\nolimits_{k'_e \in \mathcal{P}_k \setminus k_e}
    {\etamjpE \bar{e}_{m, k_e k'_e} }
    \nonumber\\
    &\hspace{-0.3em}
    \!\!+\!\!
    \sum\nolimits_{k'_e \notin \mathcal{P}_k \setminus k_e}
    \!\!\!{\!\!\etamjpE \bar{f}_{m, k_e k'_e} }
    \Big)
    \!\!+\!\! 
    (\tau_c \!-\! \tau) \Snn \snrdl
    \!\!\sum\nolimits_{m \in \M} (a_m \!+\!\! \mathcal{Z}_{m k_e})^{2}
    \nonumber\\
    &\hspace{-0.3em}\!\!+\!\!
    4(\tau_c \!-\! \tau) \Snn
    \geq 
    4 \tilde{\Xi}(\tilde{\epsilon}_{k_e})
    +
    (\tau_c \!-\! \tau) \Snn \snrdl
    \!\!\sum\nolimits_{m \in \M} \!(a_m \!-\!\! \mathcal{Z}_{m k_e})^{2}.
\end{align}
The above inequality is still non-convex, due to the presence of $(a_m \!+\!\! \mathcal{Z}_{m k_e})^{2}$. Therefore, using SCA, we get
\begin{align}~\label{ct:auxilaryvariable2a_convex}
    &4(\tau_c - \tau) \Snn \snrdl 
    \!\sum\nolimits_{m \in \M} 
    \!\Big(\!
    \etamjE \bar{d}_{m k_e} 
    \!\!+\!\!
    \sum\nolimits_{k'_e \in \mathcal{P}_k \setminus k_e}
    \!\!{\etamjpE \bar{e}_{m, k_e k'_e} }
    \nonumber\\
    &+
    \sum\nolimits_{k'_e \notin \mathcal{P}_k \setminus k_e}
    {\etamjpE \bar{f}_{m, k_e k'_e} }
    \Big)
    +
    4(\tau_c - \tau) \Snn
    + 
    (\tau_c - \tau) \Snn \snrdl
    \nonumber\\
    &\times
    \sum\nolimits_{m\in\M}
    \big(a_m^{(n)} + \mathcal{Z}_{m k_e}^{(n)} \big) \Big(2\big(a_m + \mathcal{Z}_{m k_e} \big) - a_m^{(n)} - \mathcal{Z}_{m k_e}^{(n)} \Big)
    \nonumber\\
    &\geq 
    4 \tilde{\Xi}(\tilde{\epsilon}_{k_e})
    +
    (\tau_c - \tau) \Snn \snrdl
    \sum\nolimits_{m \in \M} (a_m - \mathcal{Z}_{m k_e})^{2},
\end{align}
where we used $x^2 \geq x_0(2x - x_0)$ and replaced $x$ and $x_0$ by $a_m + \mathcal{Z}_{mj}$ and $a_m^{(n)} + \mathcal{Z}_{mj}^{(n)}$, respectively. 

To this end, \eqref{ct:SINRThreshol31} is equivalent to
\begin{align}~\label{eq:ctSINR1}
    \frac{\snrdl q_{k_{i}}^2}{\mathcal{T}_{k_i}}
    \!&\geq\! 
    \snrdl
    \sum\nolimits_{k'_i \in \mathcal{P}_k \setminus k_i}
    \!\!
    q_{k'_i}^2
    \!+\!
    \snrdl 
    \!\!
    \sum\nolimits_{m\in\M}
    \!
    \sum\nolimits_{k'_i \in \Kii}
    \!\!
    a_m \etamkpI \bar{e}_{m k_i\nonumber}\\
    &+ 
    \snrdl
    \sum\nolimits_{m\in\M}
    \sum\nolimits_{k_e \in \Kee}
    \etamjE \bar{e}_{m k_i}
    \nonumber\\
    &-
    \snrdl
    \sum\nolimits_{m\in\M}
    \sum\nolimits_{k_e \in \Kee}
    a_m \etamjE \bar{e}_{m k_i} + 1
    \nonumber
\\
    \!&\geq\! 
    \snrdl
    \sum\nolimits_{k'_i \in \mathcal{P}_k \setminus k_i}
     \!\!
    q_{k'_i}^2
    \!\!+\!\! 
    \snrdl 
    \!\!
    \sum\nolimits_{m\in\M}
    \!
    \sum\nolimits_{k'_i \in \Kii}
    \!\!
    a_m \etamkpI \bar{e}_{m k_i}
    \nonumber\\
    &+
    \snrdl
    \sum\nolimits_{m\in\M}
    \sum\nolimits_{k_e \in \Kee} 
    a_m \omega_m \bar{e}_{m k_i} + 1,
\end{align}
where we have replaced $q_{k_i} \triangleq  \sum\nolimits_{m\in\M} \alphaPZFmki \sqrt{a_m \etamkI}$, $\forall k_i \in \Kii$, $\omega_m \triangleq \etamI-\etamE$, $\etamI \triangleq \sum\nolimits_{k'_i\in \Kii} \etamkpI$, and $\etamE \triangleq \sum\nolimits_{k_e\in \Kee }\etamjE$. To convexify $q_{k_i}$, we define its ($n-1$)-th iteration, i.e., $q_{k_i}^{(n)}\triangleq \sum\nolimits_{m\in\M} \alphaPZFmki \sqrt{ a_m^{(n)}\etamkIn }$, then apply $x^2 \geq x_0(2x-x_0)$ for $x\!\triangleq\!q_{k_i}$ and $x_0\!\triangleq\!q_{k_i}^{(n)}$ to obtain
\vspace{-0.2em}
\begin{align}
    &\frac{\snrdl q_{k_i}^{(n)} }{\mathcal{T}_{k_i}} \Big(
    2q_{k_i} - q_{k_i}^{(n)}
    \Big) 
    \geq
    1
    +
    \snrdl  
    \sum\nolimits_{k'_i \in \mathcal{P}_k \setminus k_i}
    q_{k'_{i}}^{2}
    \nonumber\\
    &+
    \snrdl \sum\nolimits_{m\in\M}
    a_m \omega_m \bar{e}_{m k_i} 
    +
    \snrdl
    \sum\nolimits_{m\in\M}
    \etamE \bar{e}_{m k_i}.
\end{align}
The above inequality is still non-convex due to $a_m \omega_m$ in the right-hand side, thus we use~$4a_m \omega_m\!=\!(a_m\!+\!\omega_m)^2\!-\!(a_m\!-\!\omega_m)^2$ and then utilize  $x^2 \geq x_0(2x - x_0)$, for $x\!\triangleq\!a_m\!-\!\omega_m$ and $x_0\!\triangleq\!a_m^{(n)}\!-\!\omega_m^{(n)}$ to get the following convex constraint
\vspace{-0.2em}
\begin{align}~\label{eq:ctSINR3}
    &\frac{4 \snrdl q_{k_i}^{(n)} }{\mathcal{T}_{k_i}} \Big(
    2q_{k_i} \!-\! q_{k_i}^{(n)}
    \Big)
    \! + \!
    \rho_{d}\!\sum\nolimits_{m\in\M} \!\! 
    \bar{e}_{m k_i}
    \Big(\! a_{m}^{(n)} \!-\!\omega_{m}^{(n)}\Big)
    \nonumber\\
    &\times
    \Big(\!2\big(a_{m} \!-\! \omega_{m} \big) 
    \!\!-\!\! 
    \big(\!a_{m}^{(n)} \!-\! \omega_{m}^{(n)} \!\big)
    \!\! \Big)
    \geq
    4
    +
    4\snrdl
    \!
    \sum\nolimits_{k'_i \in \mathcal{P}_k \setminus k_i}{q_{k'_i}}
    \nonumber\\
    &+ 
    \snrdl
    \!
    \sum\nolimits_{m\in\M} \bar{e}_{m k_i} (a_m \!+\! \omega_m)^{2} 
    \!+\!
    4\snrdl
    \!
    \sum\nolimits_{m\in\M}
    \etamE \bar{e}_{m k_i}.
    \end{align}
The optimization problem~\eqref{eq:ProblemFormulationP2}  is now recast as
\vspace{-0.2em}
\begin{subequations}\label{opt:JAP:final}
\begin{alignat}{2}
&\max_{\qa, \ETAI, \ETAE, \qe}      
&\qquad&\sum\nolimits_{k_e\in\Kee}{\epsilon_{k_e}} \nonumber\\
&         &      &- \lambda^{\mathrm{pen}}\sum\nolimits_{m\in\M}{a_m - a_{m}^{(n)} \big(2a_m - a_{m}^{(n)}\big)}\label{opt:JAP:final:obj}\\
&\hspace{0.5em}\text{s.t.} 
&         &~\eqref{ct:auxilaryvariable2a_convex},~\forall k_e \in \Kee,~\label{opt:JAP:final:ct1}\\
&         &      &~\eqref{eq:ctSINR3},~\forall k_i \in \Kii,
               ~\label{opt:JAP:final:ct2}\\
&         &      &~\eqref{ct:auxilaryvariable2b2}-\eqref{ct:a_m2relax2}.~\label{opt:JAP:final:ct3}
\end{alignat}
\end{subequations}
Problem~\eqref{opt:JAP:final} is convex, and thus it can be solved using CVX. In \textbf{Algorithm~\ref{alg1}}, we outline the main steps to solve problem~\eqref{opt:JAP:final}, where $\widetilde{\qx} \triangleq \{\qa, \ETAI, \ETAE, \qe\}$ and $\widehat{\mathcal{F}} \triangleq\{\eqref{opt:JAP:final:ct1},~\eqref{opt:JAP:final:ct2},~\eqref{opt:JAP:final:ct3}\}$ are convex feasible sets. Starting from a random point $\widetilde{\qx}\in\widehat{\mathcal{F}}$, we solve \eqref{opt:JAP:final} to obtain its optimal solution $\widetilde{\qx}^*$, and use $\widetilde{\qx}^*$ as an initial point in the next iteration. The proof of this convergence property follows similar steps as the proof in \cite[Proposition 2]{vu18TCOM}, and hence, is omitted.

\textbf{Complexity analysis:} The relaxed optimization problem~\eqref{opt:JAP:final} each requires a computational complexity of $\mathcal{O}(\sqrt{A_l + A_q}(A_v + A_l + A_q)(A_v)^2)$, per iteration~\cite{Mohammadi:JSAC:2023}, where $A_v$ represents the number of real-valued scalar variables, while $A_l$ and $A_q$ denote the number of linear and quadratic constraints, respectively. For ease of presentation, if we let $K_E=K_I$, we have $A_v=M(K_e+K_i+1)+K_e$, $A_l=2M+K_e$, and $A_q=M+K_e+K_i$.
\begin{algorithm}[t]
\caption{Joint AP mode operation and power allocation}
\begin{algorithmic}[1]
\label{alg1}
\STATE \textbf{Initialize}: $n\!=\!0$, 
$\lambda^{\mathrm{pen}} > 1$, random initial point $\widetilde{\qx}^{(0)}\!\in\!\widehat{\mathcal{F}}$.
\REPEAT
\STATE Update $n=n+1$
\STATE Solve \eqref{opt:JAP:final} to obtain its optimal solution $\widetilde{\qx}^*$
\STATE Update $\widetilde{\qx}^{(n)}=\widetilde{\qx}^*$
\UNTIL{convergence}
\end{algorithmic}
\end{algorithm}
\setlength{\textfloatsep}{0.1cm}

\begin{figure*}[t]
\vspace{-0.5cm}
    \centering
    \begin{minipage}[t]{0.32\textwidth}
        \centering
         \includegraphics[trim=0 0cm 0cm 0cm,clip,width=1.12\textwidth]{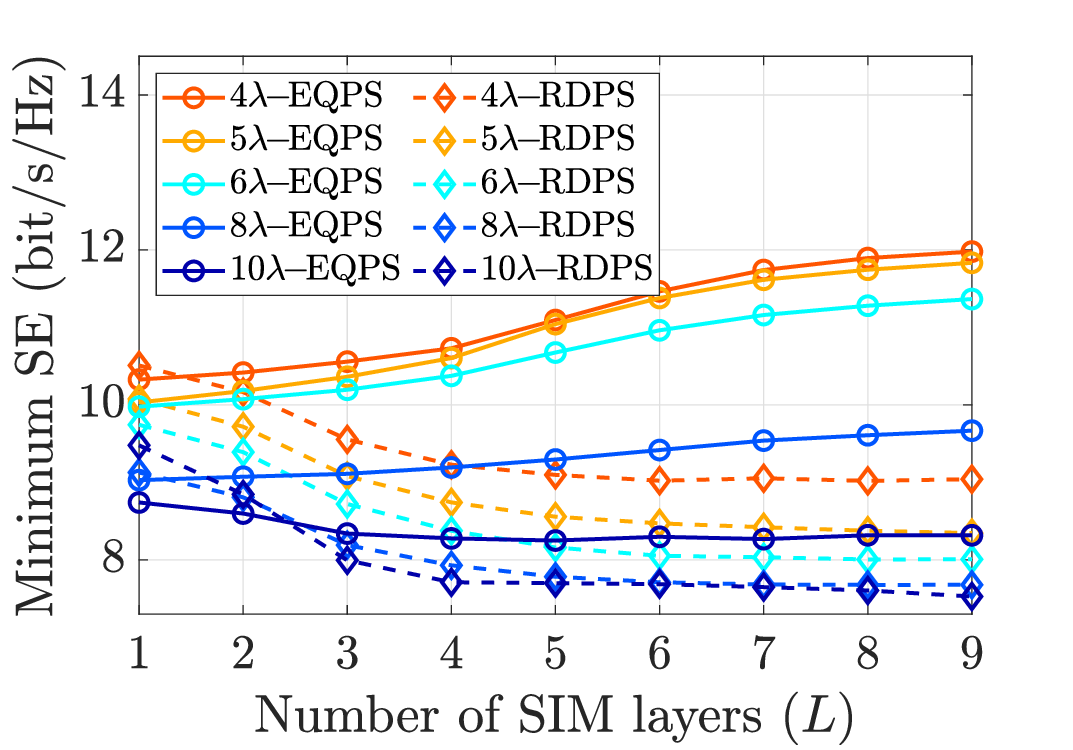}
         \vspace{-1.5em}
        \caption{\small Minimum SE versus $T_{\mathrm{SIM}}$ ($S=36$, $M = 10, N = 20, \mathrm{PRF}^{\mathtt{I}} = 0, \mathrm{PRF}^{\mathtt{I}} = 3$).}
        \label{fig:SE_v_thickness}
    \end{minipage}
    \hfill
    \begin{minipage}[t]{0.32\textwidth}
        \centering
          \includegraphics[trim=0 0cm 0cm 0cm,clip,width=1.12\textwidth]{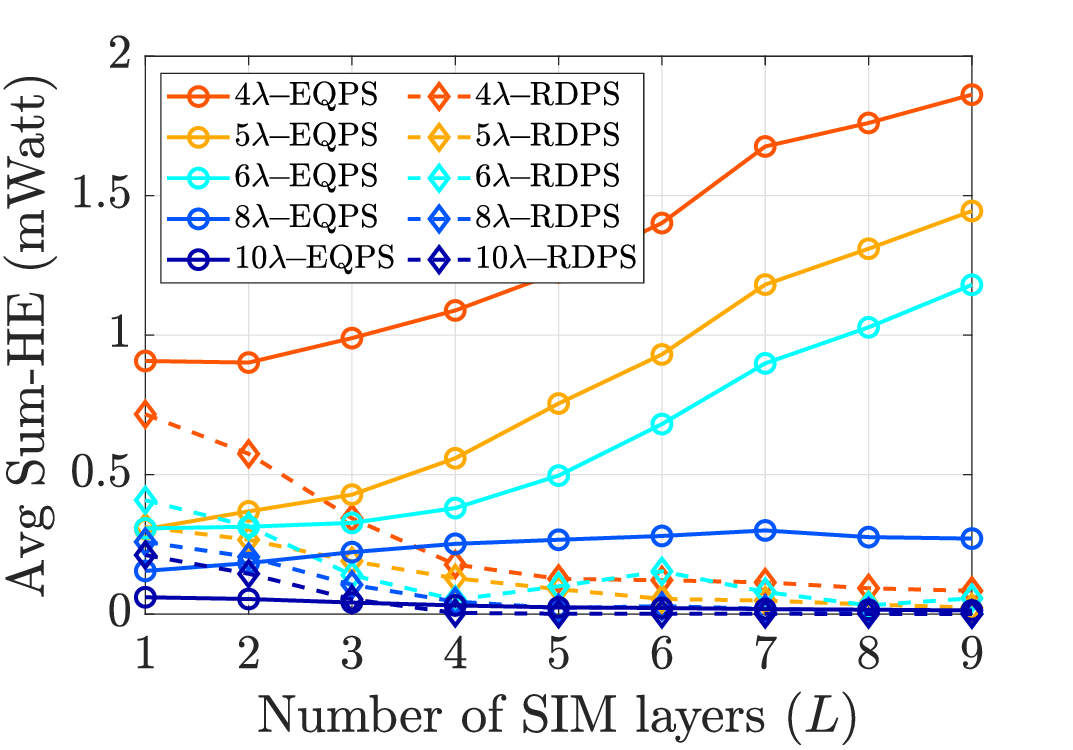}
          \vspace{-1.5em}
        \caption{\small Sum-HE versus $T_{\mathrm{SIM}}$ ($S=36$, $M = 10, N = 20, \mathrm{PRF}^{\mathtt{I}} = 0, \mathrm{PRF}^{\mathtt{I}} = 3$).}
        \label{fig:HE_v_thickness}
    \end{minipage}
    \hfill
    \begin{minipage}[t]{0.32\textwidth}
        \centering
         \includegraphics[trim=0 0cm 0cm 0cm,clip,width=1.12\textwidth]{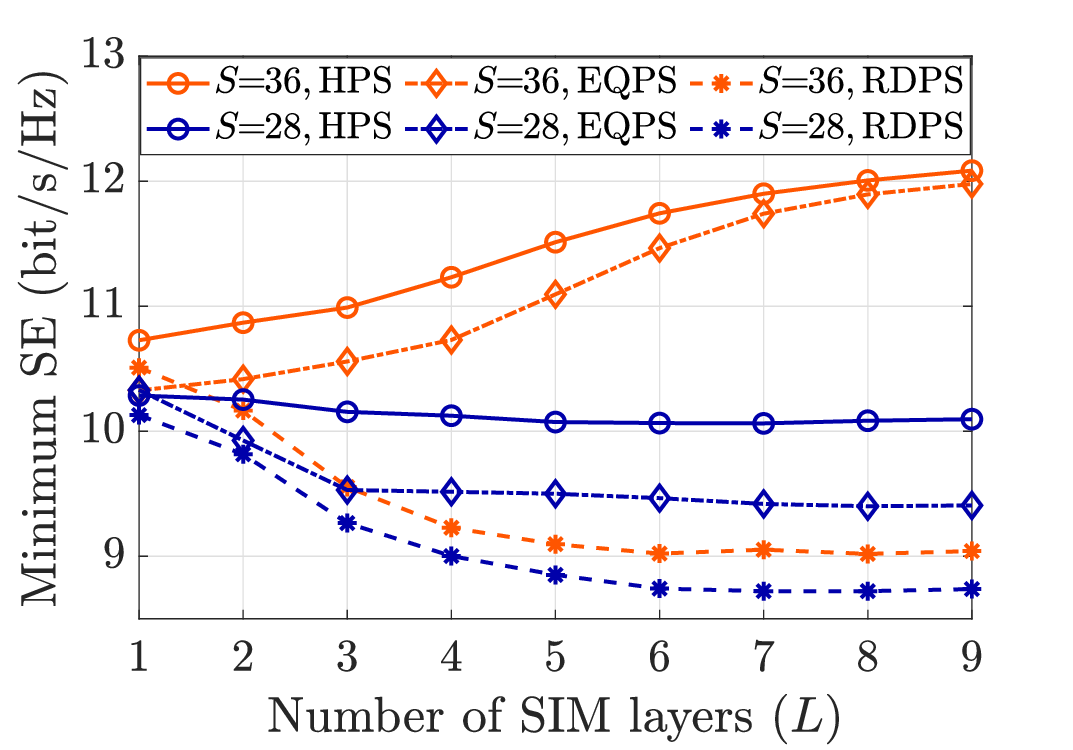}
         \vspace{-1.5em}
        \caption{\small Minimum SE versus $S$ ($T_{\mathrm{SIM}}\!=\!4\lambda$, $M\!=\!10, \mathrm{PRF}^{\mathtt{I}} = 0, \mathrm{PRF}^{\mathtt{I}} = 3$).}
        \label{fig:figurec}
    \end{minipage}
\vspace{-0.5em}
\end{figure*}

\section{Numerical Results}~\label{sec:PerformanceAnalysis}
We present numerical results to evaluate the performance of the SIM-assisted CF-mMIMO SWIPT system. We consider that the APs, $K_i=3$ IRs and $K_e=4$ ERs, are randomly distributed in a square of $0.1 \times 0.1$ km${}^2$. The heights of the SIM-attached APs and the receivers are $15$~m, and 1.65~m, respectively. We set $\tau_c = 200$, $\Snn = -92$ dBm, $\rho_d = 1$~W and $\rho_u = 0.2$ W.  In addition, we set the NLEH model's parameters as $\xi = 150$, $\chi = 0.024$, and $ = 0.024$~W~\cite{Mohammadi:TC:2024, Hua:WCNC:2024}. The LSFCs are generated following the three-slope propagation model from \cite{cite:HienNgo:cf01:2017}. Shadow fading follows a log-normal distribution with a standard deviation of $8$ dB. We denote the number of symbols for the UL channel estimation phase as $\tau = K_{\mathtt{I}} + K_{\mathtt{I}} - \mathrm{PRF}$, where $\mathrm{PRF} \triangleq \mathrm{PRF}^{\mathtt{I}} + \mathrm{PRF}^{\mathtt{E}}$ is the pilot reuse factor, which is clearly explained in \cite{Mohammadi:TC:2024}. Considering the vulnerability of information transmission to PC~\cite{2024:Mohammadi:survey}, we assume that $\tau>K_{\mathtt{I}}$ and then assign the first $K_{\mathtt{I}}$ out of the $\tau$ orthogonal pilot sequences to the $K_{\mathtt{I}}$ IRs as a priority (i.e., $\mathrm{PRF}^{\mathtt{I}}=0$), with the remaining $(\tau - K_{\mathtt{I}})$ pilot sequences then allocated to the ERs. We denote the SIM's PS benchmarks that adopt stochastic PS, equal PS, and heuristic PS per layer as \textbf{RDPS}, \textbf{EQPS}~\cite{Chien2020MassiveMC}, and \textbf{HPS}~\cite{hua:icc:2025}, respectively. 
We also incorporate the \textit{low-complexity} random AP mode operation with equal power allocation (\textbf{RAPEPA}) as a benchmark to evaluate the effectiveness of the proposed optimization algorithms. 
To ensure a fair comparison between the proposed \textbf{JAPPA} and \textbf{RAPEPA}, the number of I-APs and E-APs in \textbf{RAPEPA} is set to be aligned with the output of \textbf{JAPPA} in each network realization. Mathematically, let $M_{\mathtt{I}}^{\textbf{JAPPA}}$ and $M_{\mathtt{E}}^{\textbf{JAPPA}}$ denote the number of APs designated as I-APs and E-APs by the \textbf{JAPPA} scheme, respectively, where $M = M_{\mathtt{I}}^{\textbf{JAPPA}} +M_{\mathtt{E}}^{\textbf{JAPPA}}$. We denote the number of I-APs and E-APs designated by the \textbf{RAPEPA} scheme as $M_{\mathtt{I}}^{\textbf{RAPEPA}}$ and $M_{\mathtt{E}}^{\textbf{RAPEPA}}$, which stochastically sample from a discrete uniform distribution as $M_{\mathtt{I}}^{\textbf{RAPEPA}} \sim \mathcal{U}\big( M_{\mathtt{I}}^{\textbf{JAPPA}} - \delta, M_{\mathtt{I}}^{\textbf{JAPPA}} + \delta \big)$ and $M_{\mathtt{E}}^{\textbf{RAPEPA}} \triangleq M - M_{\mathtt{I}}^{\textbf{RAPEPA}}$,
where $\delta$ is a small integer threshold to allow slight variations in the assignment. The power allocation in \textbf{RAPEPA} is then set equally for all $M$.

\vspace{-0.7em}
\subsection{Impact of SIM Settings on SWIPT CF-mMIMO Performance}~\label{sec:simsettings}
In this section, we analyze the impact of SIM parameters on the minimum SE and sum-HE performance in a SIM-assisted CF-mMIMO SWIPT system. The effectiveness of EM wave manipulation by SIMs is governed by multiple factors, including the metasurface thickness $T_{\mathrm{SIM}}$, the number of layers ($L$), the number of reflective elements per layer ($N$), and the geometric design of the SIMs. Given that these parameters directly influence SWIPT performance, their optimal settings must be carefully evaluated based on our proposed CF-mMIMO SWIPT framework~\cite{an:sim:hardware1, Li:TC:2024}. 
To ensure our model remains physically consistent with the passive nature of the SIM-assisted propagation, we strictly enforce the spectral norm constraint $\Vert\mathbf{H}^{m\ell} \Vert < 1$ for every settings of the SIMs. This condition reflects the physical principle that the received signal energy cannot exceed the transmitted signal energy \cite{Nerini:CL:2024}.
We set up the layout of the PS elements per metasurface as $S_y \times S_z$ where $S_z = 4$. The spacing between adjacent antennas or elements is $\lambda/2$. 
Table~\ref{tab:thickness} presents the values of $\big\Vert \qH^{m \ell} \big\Vert \!<\! 1$ for the different SIM thicknesses considered. For instance, as $T_{\mathrm{SIM}} = 3\lambda$ and $S = 40$ make $\big\Vert \qH^{m \ell} \big\Vert \!>\! 1$ for any $l\in \mathcal{L}$, we exclude that parameter's value in the numerical simulations. 
\begin{table}[t!]
    \caption{Values of $\big\Vert \qH^{m \ell} \big\Vert$ versus various SIM thicknesses} 
    \vspace{-0.7em}
    \centering
    \begin{tabular}{|c|c|c|c|c|c|c|}
        \hline
         & $10\lambda$ & $8\lambda$ & $6\lambda$ & $5\lambda$ & $4\lambda$ & $3\lambda$ \\  
        \hline
        $S = 40$  & 0.9985   & 0.9993   & 0.9961   & 1.0351  & 1.0113 & 1.1653   \\  
        \hline
        $S = 36$  & 0.9966   & 0.9986   & 0.9979   & 0.9989  & 0.9996 & 1.1128   \\  
        \hline
        $S = 32$  & 0.9956   & 0.9982   & 0.9992   & 0.9976  & 0.9906 & 1.1222   \\  
        \hline
        $S = 28$  & 0.9954   & 0.9981   & 0.9992   & 0.9976  & 0.9882 & 1.0952   \\  
        \hline
    \end{tabular}
    \vspace{-0.1em}
    \label{tab:thickness}
\end{table}

Figure~\ref{fig:SE_v_thickness} and ~\ref{fig:HE_v_thickness} showcase the impact of SIMs thickness on the CF-mMIMO SWIPT performance. In particular, $T_{\mathrm{SIM}}$ directly affects the adjacent metasurface spacing $d^{\ell}_{s, \breve{s}} \triangleq T_{\mathrm{SIM}}/L$, which provides additional DoF for the EM wave manipulation. 
As the $d^{\ell}_{s, \breve{s}}$ expands, the propagation deteriorates due to the \textit{rank deficiency} of the propagation matrix $\FmRIS$. Under the \textbf{RDPS} scheme, the minimum SE and sum-HE degrade by $16$\% and $76$\%, respectively, as $L$ increases. In contrast, the sub-optimal \textbf{EQPS} scheme achieves approximately $16$\% and $105$\% gains in WIT and WPT performance, respectively, as $L$ increases. 
This can be explain that the randomly configured \textbf{RDPS} acts as a destructive propagating medium. With each additional layer, the signal undergoes further detrimental scattering and attenuation, destroying coherent beamforming gain. Conversely, \textbf{EQPS} benchmark shows improved performance as $L$ increases since equal phase shifts configuration preserves constructive forward signal propagation and leverages the array gain of the stacked architecture. The numerical results highlight the critical role of $T_{\mathrm{SIM}}$, but without an appropriate PS configuration, propagation attenuation occurs regardless of the chosen thickness settings.

\begin{figure*}[t]
    \centering
        \begin{minipage}[t]{0.32\textwidth}
        \centering
          \includegraphics[trim=0 0cm 0cm 0cm,clip,width=1.12\textwidth]{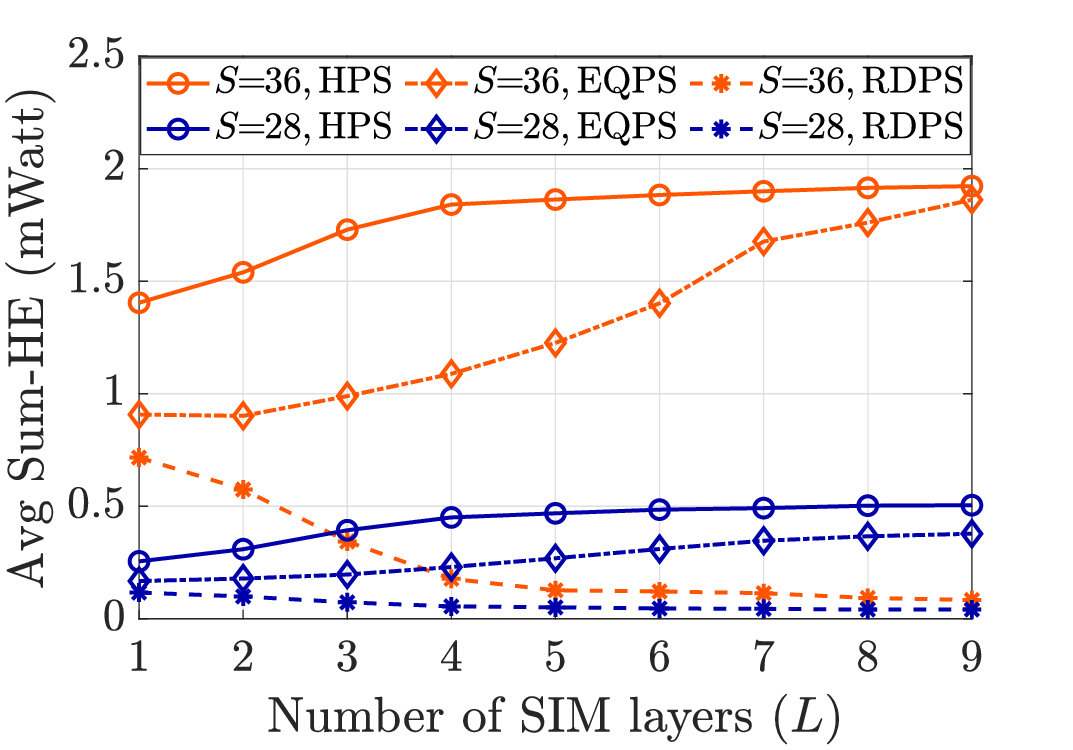}
          \vspace{-1.5em}
        \caption{\small Sum-HE versus $S$ ($T_{\mathrm{SIM}}=4\lambda$, $M = 10, \mathrm{PRF}^{\mathtt{I}} = 0, \mathrm{PRF}^{\mathtt{I}} = 3$).}
        \label{fig:figuref}
    \end{minipage}
    \hfill
    \begin{minipage}[t]{0.32\textwidth}
        \centering
          \includegraphics[trim=0 0cm 0cm 0cm,clip,width=1.12\textwidth]{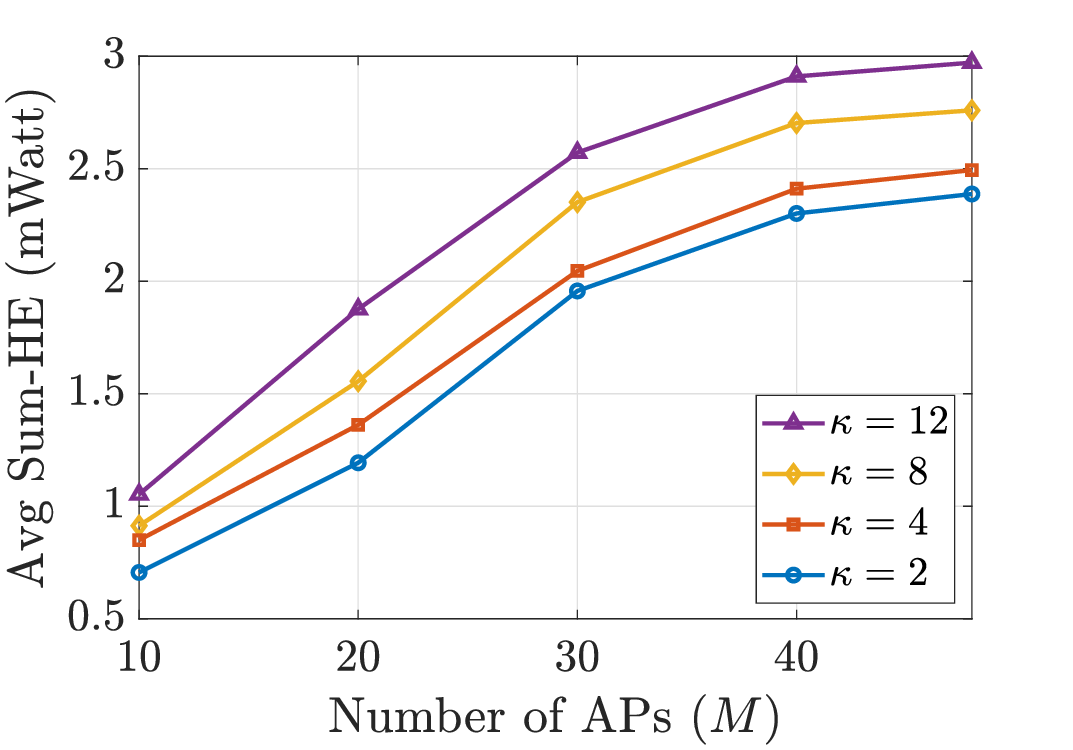}
          \vspace{-1.5em}
        \caption{\small The impact of Ricean factors on the sum-HE ($MN\!=\!480, S=36,L=2, \mathrm{PRF}^{\mathtt{I}} = 0, \mathrm{PRF}^{\mathtt{I}} = 3$).}
    \label{fig:wpt_v_ricean}
    \end{minipage}
    \hfill
    \begin{minipage}[t]{0.32\textwidth}
        \centering
          \includegraphics[trim=0 0cm 0cm 0cm,clip,width=1.12\textwidth]{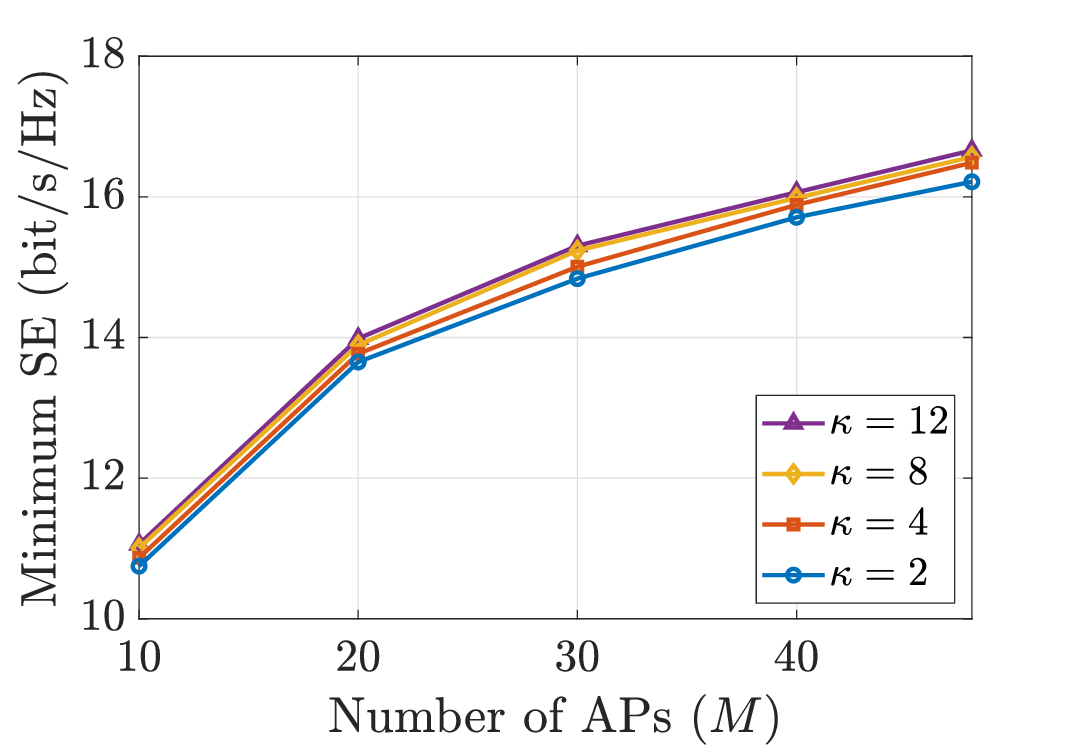}
          \vspace{-1.5em}
        \caption{\small The impact of Ricean factors on the minimum SE ($MN\!=\!480, S=36,L=2, \mathrm{PRF}^{\mathtt{I}} = 0, \mathrm{PRF}^{\mathtt{I}} = 3$).} \normalsize
        \label{fig:wit_v_ricean}
    \end{minipage}
    \hfill
\vspace{-0.7em}
\end{figure*}
\begin{figure*}[t]
\vspace{-0.1em}
    \centering
    \begin{minipage}[t]{0.32\textwidth}
        \centering
            \includegraphics[trim=0 0cm 0cm 0cm,clip,width=1.12\textwidth]{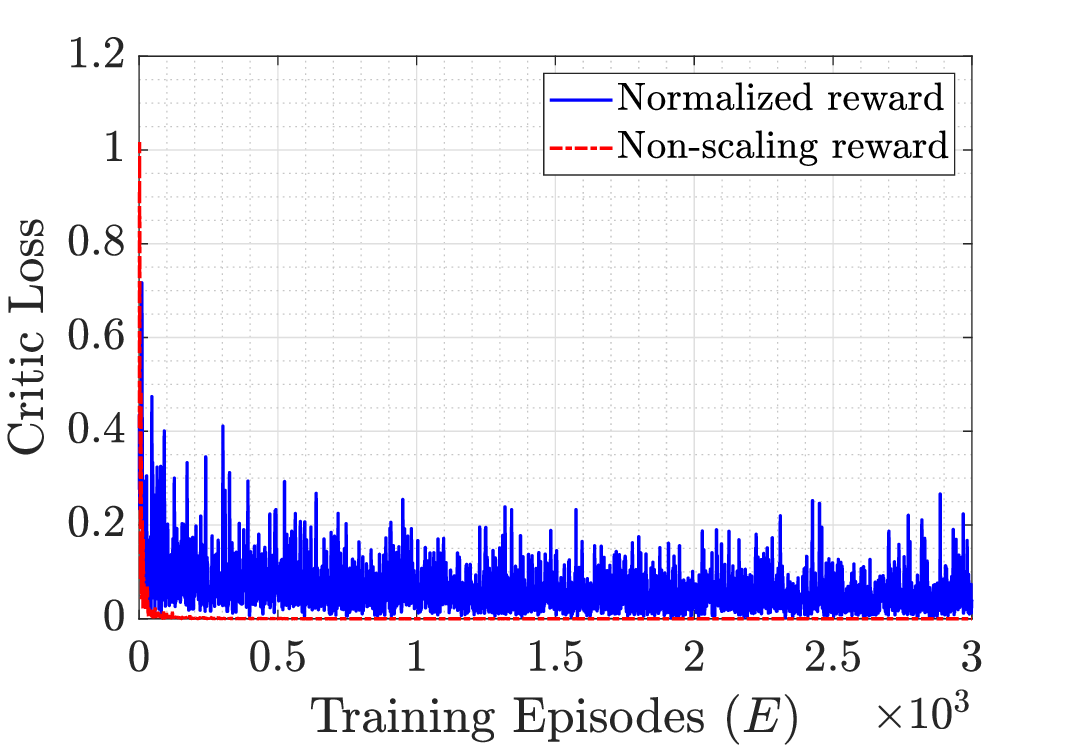}
            \vspace{-1.5em}
        \caption{\small Impact of reward structures onto the critic loss patterns ($\SEth\!=\!12$,  $M\!=\!20, \mathrm{LRC}=1e\!\!-\!\!4, \mathrm{LRA}=5e\!\!-\!\!5$).} \normalsize
        \label{fig:critic_loss}
    \end{minipage}
    \hfill
    \begin{minipage}[t]{0.32\textwidth}
        \centering
          \includegraphics[trim=0 0cm 0cm 0cm,clip,width=1.12\textwidth]{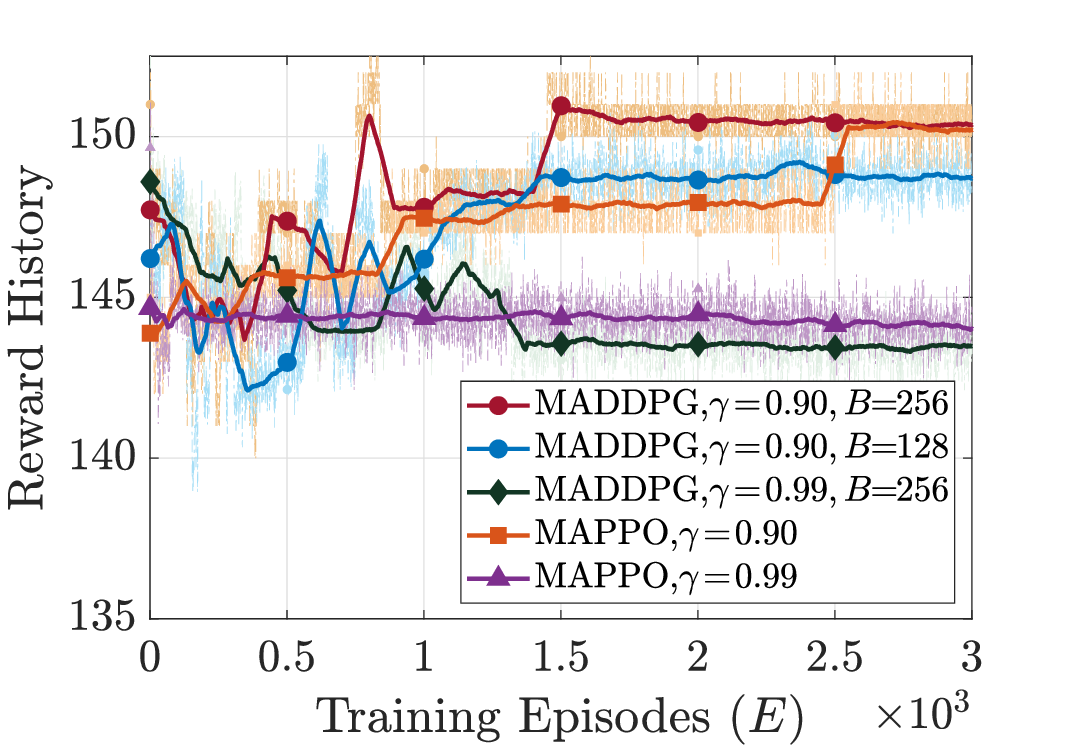}
          \vspace{-1.5em}
        \caption{\small The impact of hyperparameters on the learning pattern ($\SEth\!=\!12$,  $M\!=\!20, \mathrm{LRC}=1e\!\!-\!\!4, \mathrm{LRA}=5e\!\!-\!\!5$).}
        \label{fig:hyperparams_pattern}
    \end{minipage}
    \hfill
    \begin{minipage}[t]{0.32\textwidth}
        \centering
          \includegraphics[trim=0 0cm 0cm 0cm,clip,width=1.12\textwidth]{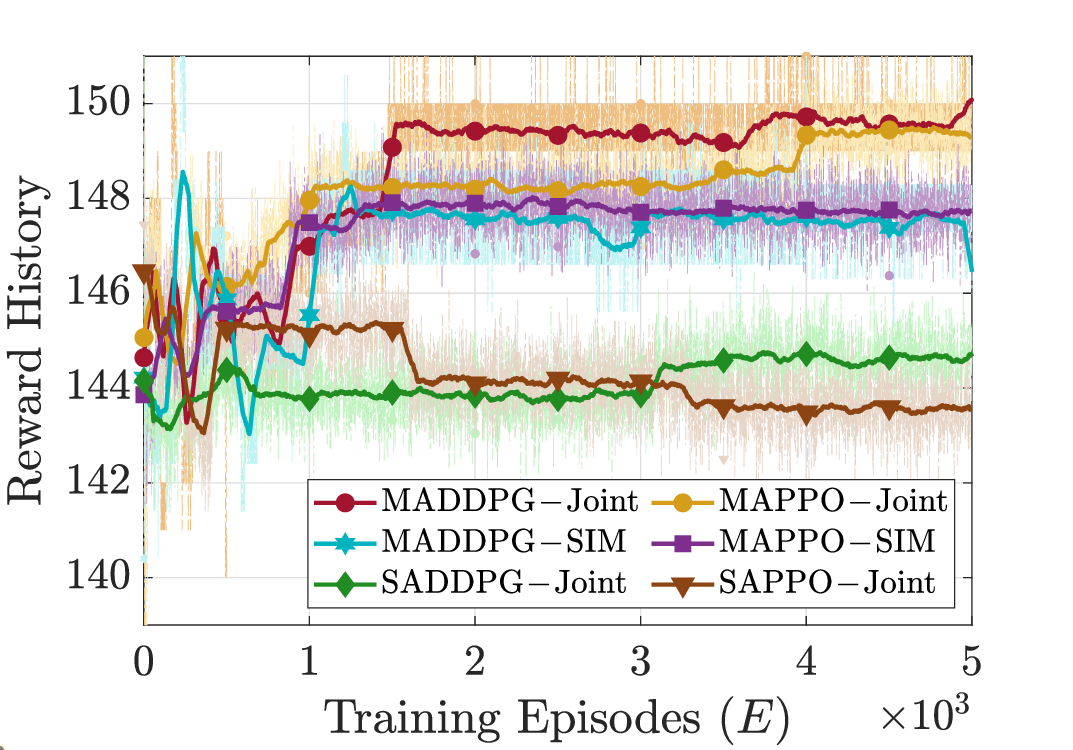}
          \vspace{-1.5em}
        \caption{\small The impact of joint local actions on the reward ($\SEth\!=\!12$,  $M\!=\!20, \mathrm{LRC}=1e\!\!-\!\!4, \mathrm{LRA}=5e\!\!-\!\!5, \bar{\gamma}=0.9, B=256$).}
        \label{fig:learning_pattern}
    \end{minipage}
\vspace{-0.5em}
\end{figure*}

Figures~\ref{fig:figurec} and~\ref{fig:figuref} show the impact of SIM PSs configuration schemes on minimum SE and sum-HE metrics. Under physical constraints, an insufficient number of elements per layer, $N$, leads to rank deficiency in the propagation channel, degrading SWIPT performance. For $L=5$ and $S=36$, the \textbf{HPS} scheme yields minimum SE and sum-HE values that are $1.05$ and $1.52$ times greater than those achieved by the \textbf{EQPS} scheme, respectively. By leveraging heuristic PS search, we can effectively reduce the number of SIM layers and PS elements while maintaining optimal DoF, thereby minimizing propagation attenuation and lowering computational and hardware requirements. It is clear that SIMs with multiple layers outperform RISs in terms of SE and HE metrics.

Figure \ref{fig:wpt_v_ricean} shows the impact of the Ricean factor on the average sum-HE in the SWIPT scenario. When $M=10$, the average sum‑HE increases from approximately 0.7 mW for $\kappa=2$ to about 1.05 mW for $\kappa=12$, corresponding to a two-fold improvement. As the number of deployed APs grows, this upward trend persists, with higher Ricean factors consistently yielding larger harvested energy and providing gains on the order of 30\% when $\kappa$ magnitude increases from 2 to 12. These results indicate that the harvested energy at the receivers is highly sensitive to the Ricean factor ($\kappa$), which aligns with the strong dependence of the rectified power on the received signal amplitude. This increase is primarily due to the more pronounced LoS component associated with higher $\kappa$ values.

Figure \ref{fig:wit_v_ricean} illustrates the impact of the Ricean factor on the achievable minimum SE. For denser AP deployment, the benefit of larger $\kappa$ becomes slightly more pronounced. At $M=48$, the minimum SE increases from around 16.2 to 16.6 bit/s/Hz when $\kappa$ grows from 2 to 12, which corresponds to a gain of almost 3\%. Thus, the minimum SE exhibits an upward trend as the Ricean factor increases, but the sensitivity is more moderate than that of the sum‑HE metric.

\vspace{-0.8em}
\subsection{Numerical Evaluation for DRL Frameworks}
We adopt the preliminaries of deep deterministic policy gradient (\textbf{DDPG})~\cite{Lillicrap:DDPG} and proximal policy optimization (\textbf{PPO})~\cite{schulman:PPO} as representative algorithms for the off-policy and on-policy learning strategies, respectively.
We determine the hyper-parameters for the training process that fit the best with our studied systems through a trial-and-error process. The critic network is made up of two hidden layers (with 1028 and 512 nodes), while the actor network includes three hidden layers (with 1028, 512, and 256 nodes). The learning rates for the actor and critic network are 1e-4 and 5e-3, respectively. We train both \textbf{DDPG} and \textbf{PPO} mechanisms for 5,000 episodes and 300 steps per episode. For the \textbf{DDPG} method, we set the batch size $\mathrm{B_e}=128$, discount factor $\bar{\gamma} = 0.99$, and buffer length $1e6$. For the \textbf{PPO} method, \textbf{PPO} clipping $0.15$ and generalized advantage estimation rate $0.995$ are applied to ensure learning stability and accurate action's value. The episodic exploration rate and the update of the target network are $1e-4$. Additionally, ReLu activation is implemented for all hidden layers and critic's output layer, while Tanh activation is applied for the actor's output layer. Eventually, gradient clipping $0.5$ is set to prevent gradient explosion and unstable training. 

Figure \ref{fig:critic_loss} illustrates the impact of the normalized reward function on the critic loss patterns, which directly reflects the stability of the long-term return during the learning progress. In this regard, we simply set the baseline step reward $r^{\rm{base}}[t]= \lambda_r\Delta_{\rm{HE}} + (1-\lambda_r)\mathsf{Q}[t] - \lambda_{\rm{SE}}$ without the max-min normalization procedure in (36). Without reward scaling, the critic’s loss showed large fluctuations during the initial episodes because the long-term target returns had high variance. Without stable backpropagation, the critic loss with non-scaling reward quickly collapses to near zero, indicating that the agent fails to learn entirely because the feedback signal becomes numerically problematic. By appropriately scaling the reward to a suitable range, the returns and advantage computations are kept within a consistent stable regime leading to convergent learning behavior.

Figure~\ref{fig:hyperparams_pattern} illustrates the impact of the batch size $B$, used in \textbf{DDPG}, and the expected long-term reward discount factor $\bar{\gamma}$, used in both \textbf{DDPG} and \textbf{PPO} frameworks. While $\bar{\gamma}$ governs the trade-off between short-term and long-term rewards, $B$ influences the stability and efficiency of gradient updates during learning. We first observe that the learning behavior in the initial phase of \textbf{DDPG} tends to be less stable compared to \textbf{PPO}. This instability can be attributed to the stochastic nature of experience replay and the Ornstein-Uhlenbeck exploration strategy, which aim to explore the solution space before transitioning to exploitation. In contrast, \textbf{PPO} exhibits more stable learning in the early stages due to their use of latest trajectory tuples, which constrains abrupt policy changes and promotes smoother convergence. 
For \textbf{MADDPG} and \textbf{MAPPO} schemes, a high discount factor of $\bar{\gamma}=0.99$ may not be ideal for episodic accumulated reward structures, even with large sampling batches. Although this setting can initially achieve the highest reward in early episodes, it tends to make the learned policies excessively prioritize distant future returns. This can result in poor gradients and cause the policies to converge towards sub-optimal solutions

\begin{table}[t]
\caption{\small Training time of the learning-based solutions (seconds per episode)}
\vspace{-1.2em}
\footnotesize
\begin{center}
\renewcommand{\arraystretch}{1}
\begin{tabular}{|r|r|r|r|r|}
\hline
\multicolumn{1}{|c|}{\multirow{2}{*}{\textbf{Benchmarks}}} & \multicolumn{4}{c|}{\textbf{Number of APs ($M$)}}                                                \\ \cline{2-5} 
\multicolumn{1}{|c|}{}                                   & \multicolumn{1}{c|}{\textbf{$10$}} & \multicolumn{1}{c|}{\textbf{$20$}} & \multicolumn{1}{c|}{\textbf{$30$}} & \multicolumn{1}{c|}{\textbf{$40$}} \\ \hline
\textbf{MAPPO}   & 35.09   & 54.88    & 61.78   & 74.26       \\ \hline
\textbf{MADDPG}  & 31.24   & 34.19    & 37.97   & 45.26       \\ \hline
\textbf{SAPPO}   & 23.58   & 24.46    & 26.19   & 28.30       \\ \hline
\textbf{SADDPG}  & 21.38   & 22.39    & 23.47   & 24.03       \\ \hline
\end{tabular}%
\vspace{-1em}
\end{center}
\label{table:execution_time}
\end{table}


\begin{table}[t]
\caption{\small Inference/execution time between learning-based and convex-based solutions (seconds)}
\vspace{-1.2em}
\footnotesize
\begin{center}
\renewcommand{\arraystretch}{1}
\begin{tabular}{|r|r|r|r|r|}
\hline
\multicolumn{1}{|c|}{\multirow{2}{*}{\textbf{Benchmarks}}} & \multicolumn{4}{c|}{\textbf{Number of APs ($M$)}}                                                \\ \cline{2-5} 
\multicolumn{1}{|c|}{}                                   & \multicolumn{1}{c|}{\textbf{$10$}} & \multicolumn{1}{c|}{\textbf{$20$}} & \multicolumn{1}{c|}{\textbf{$30$}} & \multicolumn{1}{c|}{\textbf{$40$}} \\ \hline
\textbf{JAPPA}   & 193.17   & 356.23   & 582.41   & 826.12      \\ \hline
\textbf{MAPPO}   & 0.021   & 0.026    & 0.033   & 0.036       \\ \hline
\textbf{MADDPG}  & 0.021   & 0.022    & 0.027   & 0.033       \\ \hline
\textbf{SAPPO}   & 0.073   & 0.074    & 0.077   & 0.080       \\ \hline
\textbf{SADDPG}  & 0.060   & 0.061    & 0.065   & 0.066       \\ \hline
\end{tabular}%
\vspace{-0.2em}
\end{center}
\label{table:inference_time}
\end{table}

\begin{figure}[t]
	\centering
	\vspace{-.75em}
	\includegraphics[width=0.46\textwidth]{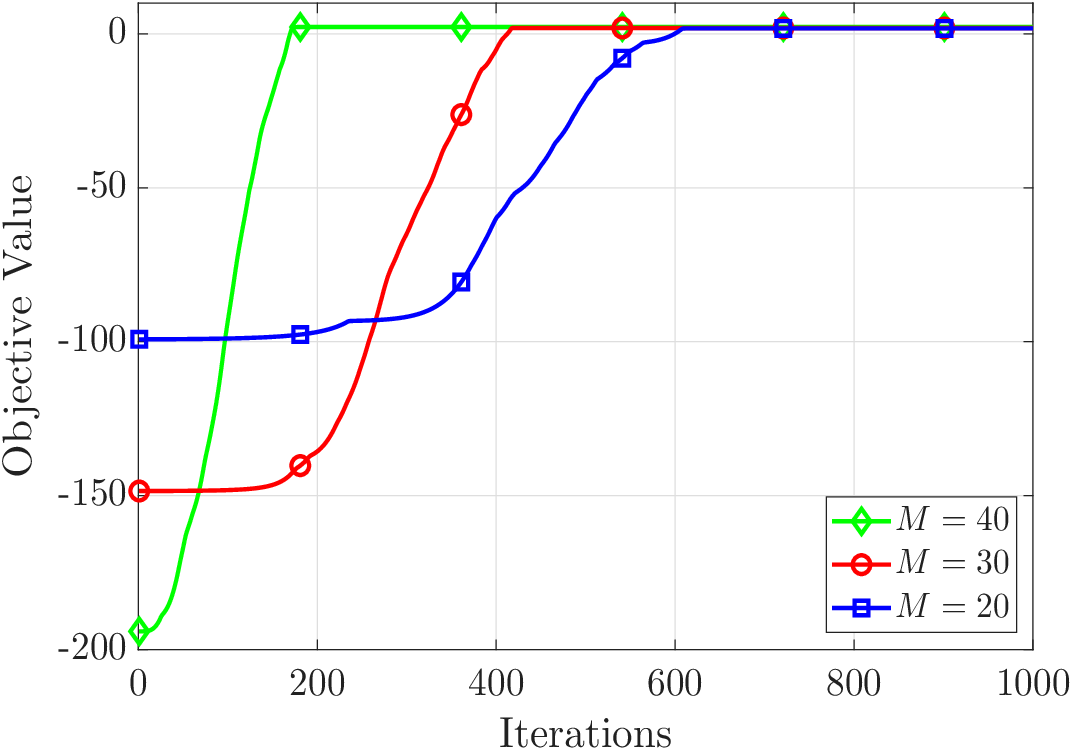}
	\vspace{-0.5em}
	\caption{The convergence pattern of the \textbf{JAPPA} solution ($N = 480/M, S=36, L=2, \SEth = 10$ [bit/s/Hz], $ K = 3, J = 4$).}
	\label{fig:SCAPattern}
\end{figure}

To this end, two scenarios for evaluating the CTDE franework are considered: (i) each AP optimizes its full local action as defined in~\eqref{eq:LocalAction}, namely \textbf{-Joint Action}, and (ii) each AP only optimizes the SIM configurations, while $\textbf{RAPEPA}$ is executed by the backhaul CPU (\textbf{-SIM}). 
Figure~\ref{fig:learning_pattern} provides the numerical results of \textbf{-Joint Action} and \textbf{-SIM} for the CTDE and CTCE strategies. For both on- and off-policy mechanisms, the \textbf{MA} framework reaches its optimal policies after 3,000 episodes, while the \textbf{SA} counterparts require 4,000 episodes to reach the phase of reward exploitation since single agent suffers from limited observability, making it challenging to explore the joint action space effectively and thus requiring more episodes to converge to stable policies. In addition, the reward increasing trend is insignificant for both CTCE mechanisms. 
This advantage can be attributed to the significantly large dimensions of $d_{go}$ and $d_{ga}$ spaces in the SA setting, which may hinder the learning efficiency due to the \textit{curse of dimensionality} and the absence of coordinated learning signals. 
As a result, \textbf{SAPPO} struggles due to limited experience reuse, high policy variance, and poor visibility in large dimensions, while \textbf{SADDPG} slightly benefits from stabilized learning through experience replay, making it more resilient to complexity.
Moreover, \textbf{-Joint Action} achieves approximately two-fold improvement in episode-accumulated reward compared to \textbf{-SIM}, averaged across CTDE settings. However, optimizing only \textbf{-SIM} yields more stable convergence and consistently satisfies sum-HE under the SE constraint. This demonstrates that optimizing SIMs' PS is crucial, as it noticeably enhances performance in CF-mMIMO SWIPT systems.


\begin{figure*}[t]
\vspace{-0.5cm}
    \centering
    \begin{minipage}[t]{0.32\textwidth}
        \centering
          \includegraphics[trim=0 0cm 0cm 0cm,clip,width=1.12\textwidth]{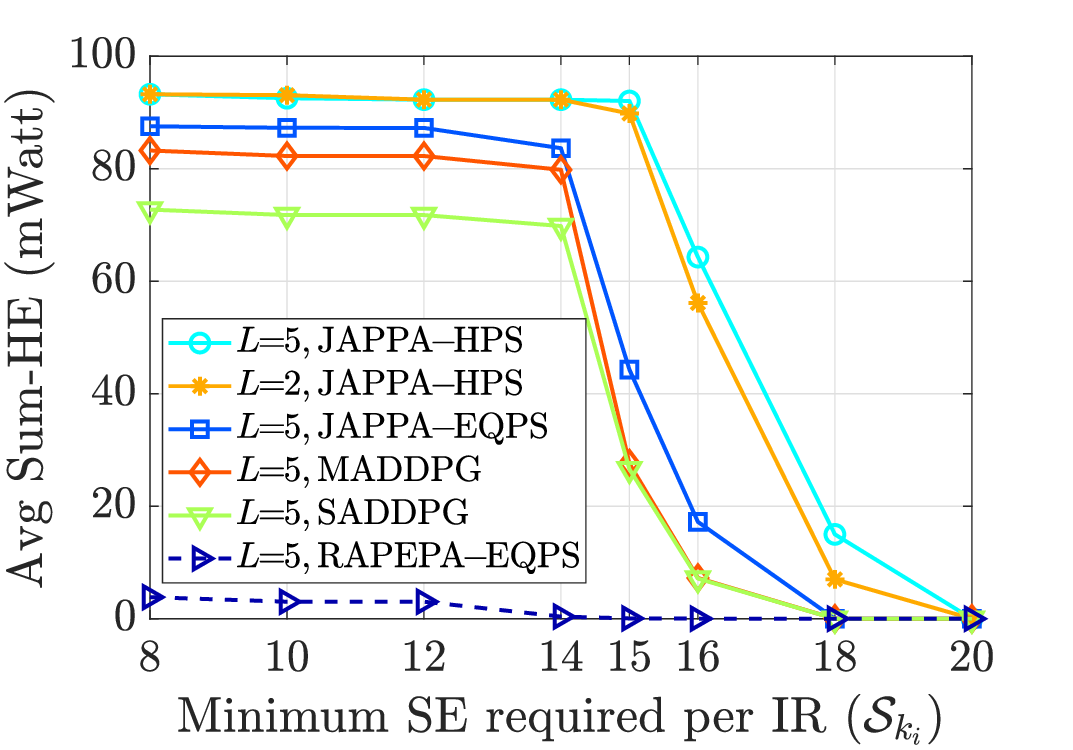}
          \vspace{-1.5em}
        \caption{\small Average sum-HE versus $\mathcal{S}_{k_i}$ ($S=36, M = 30$).\normalsize}
    \label{fig:HE_v_seth}
    \end{minipage}
    \hfill
    \begin{minipage}[t]{0.32\textwidth}
        \centering
          \includegraphics[trim=0 0cm 0cm 0cm,clip,width=1.12\textwidth]{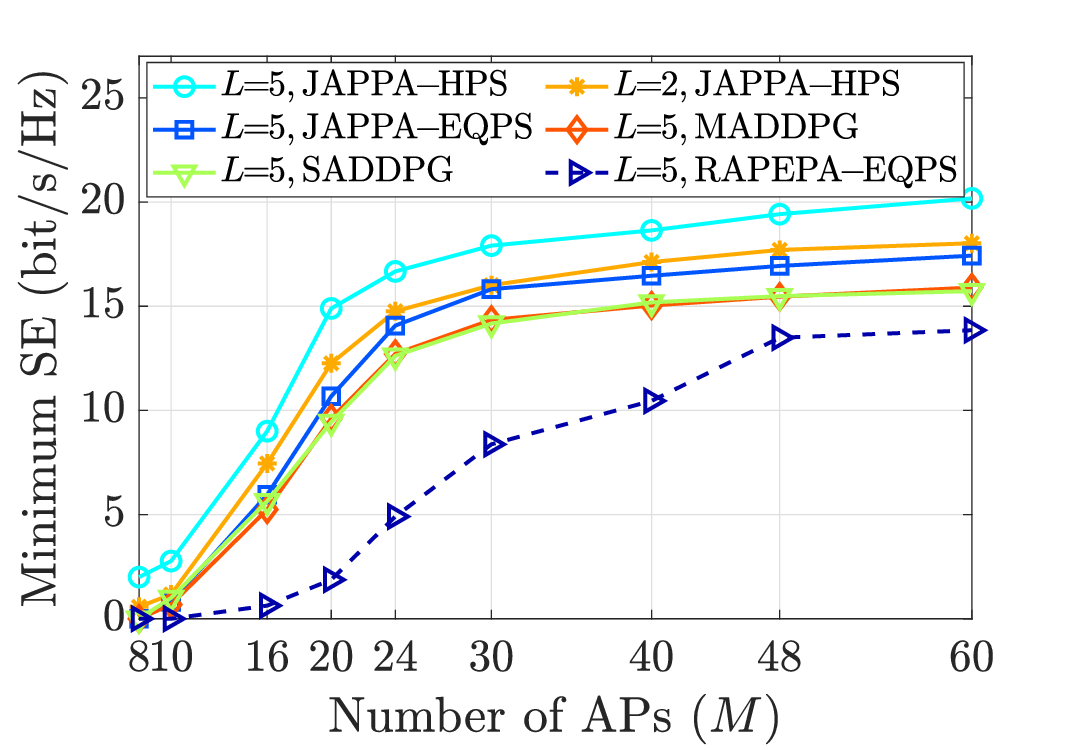}
          \vspace{-1.5em}
        \caption{\small Minimum SE versus $M$ ($\Gamma_{k_e}=0.01$ mW, $\SEth = 15$ bit/s/Hz). \normalsize}
        \label{fig:SE_v_M}
    \end{minipage}
    \hfill
    \begin{minipage}[t]{0.32\textwidth}
        \centering
            \includegraphics[trim=0 0cm 0cm 0cm,clip,width=1.12\textwidth]{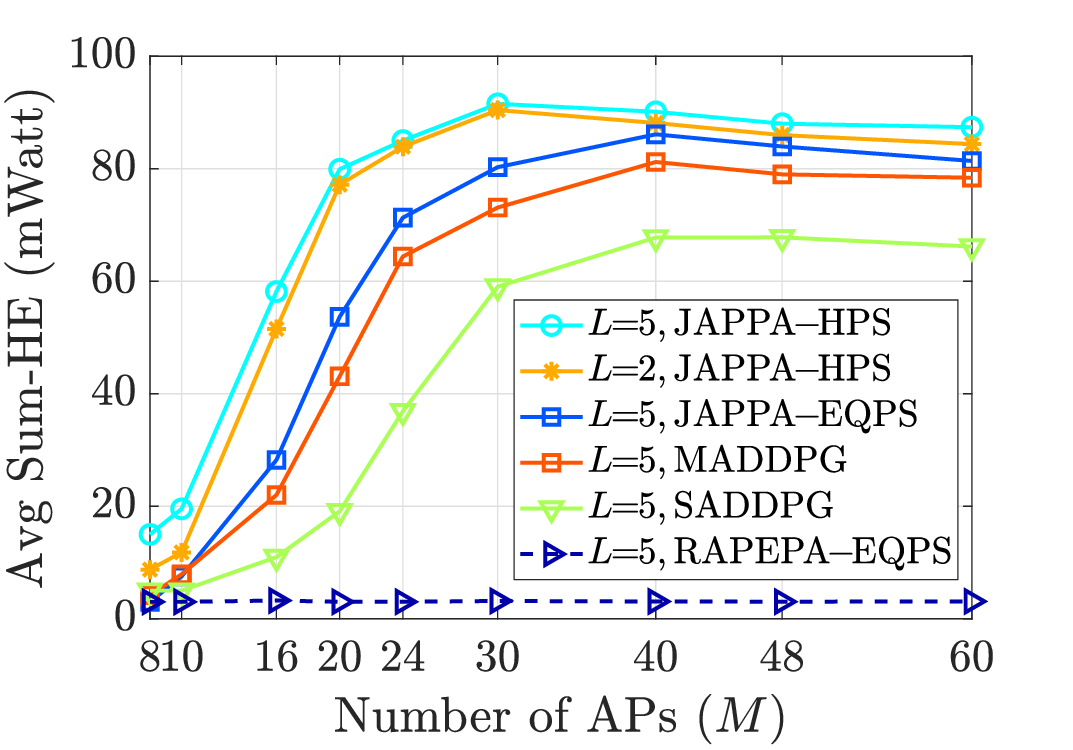}
            \vspace{-1.5em}
        \caption{\small Average sum-HE versus $M$ ($\Gamma_{k_e}=0.01$ mW, $\SEth = 15$ bit/s/Hz).\normalsize}
        \label{fig:HE_v_M}
    \end{minipage}
\vspace{-1.0em}
\end{figure*}
Table \ref{table:execution_time} compares the average training time per episode for the learning-based frameworks as the number of APs scales. 
We observe that the CTCE schemes, i.e., \textbf{SADDPG} and \textbf{SAPPO}, exhibit faster training times compared to their CTDE counterparts, \textbf{MADDPG} and \textbf{MAPPO}. This is because the CTCE framework trains a single pair of global actor-critic networks, whereas MA frameworks correspondingly update $M$ actor networks, introducing significant computational overhead from managing multiple optimizers and gradient backpropagations. Among the CTDE frameworks, \textbf{MAPPO} requires significantly longer training time than \textbf{MADDPG} due to its on-policy nature, which involves computationally intensive calculations for GAE factor and entropy regularization, while the deterministic single-gradient updates of the off-policy \textbf{MADDPG} require a simpler back-propagation process \cite{cite:thien_iotj_2023}.

Table \ref{table:inference_time} illustrates the online inference (execution) time of the trained DRL models against the convex-based \textbf{JAPPA} benchmark. The main advantages of the learning-based solutions compared to their convex-based counterparts are their ability to make instantaneous decisions after obtaining the optimal parameterized models. The latency in \textbf{JAPPA} arises from its nature to iterate over all high-dimensional SCA loops until convergence. In addition, while SA schemes exhibit faster training times, the MA schemes achieve superior inference speed due to the parallel distributed execution inherent to the CTDE strategy, where $M$ APs perform inference on their local $(L\times S+K+1)$-dimensional solutions, simultaneously enabled by smaller input and action dimensions per actor network, as described in details in Section \ref{sec:CTDE}.

\vspace{-1em}
\subsection{Performance Comparison Across  Optimization Methods}~\label{sec:vb}
In this subsection, we present numerical results comparing the DRL-based and convex-based solutions for enhancing CF-mMIMO SWIPT performance under the physical SIM settings described in Section~\ref{sec:simsettings}. After training, the learned policies are directly used to determine the optimal solutions, and the performance is averaged over multiple network realizations. We choose to numerically evaluate \textbf{MADDPG} and \textbf{SADDPG} as representatives of the CTDE and CTCE strategies, respectively, given the comparable SWIPT performance observed between the \textbf{DDPG} and \textbf{PPO} mechanisms. 

Figure~\ref{fig:SCAPattern} shows the convergence behavior of the \textbf{JAPPA} solution across scalable network setups. For a fair comparisons between convex-based and learning-based solutions, we initialize the same random seed $(100)$ in both PYTHON and MATLAB simulations so that the system parameters are identically generated before the optimization benchmarks begin. The penalty term to satisfy the binary AP mode selection constraints is initialized as $\lambda^{\rm{pen}}=10$. In general, \textbf{JAPPA} is able to provide near-optimal solutions with maximum achievable HE gain per ER for all expanding scenarios. We observe that with a larger number of deployed APs, the convex-based solutions converge faster due to an increasing number of serving E-APs to achieve the maximum sum-HE gains while the minimum SE threshold is satisfied. Conversely, a smaller number of APs ($M = 20$) decreases the computational complexity per iteration but induces longer execution time to allocate adequate number of E-APs per scenario setup. Overall, the convex-based baseline is robust, though its scalability can become a bottleneck in larger deployments.

Figure~\ref{fig:HE_v_seth} shows how different optimization schemes affect sum-HE under varying SE QoS levels. Joint optimization enables operation in the upper region of the logistic harvesting curve~\eqref{eq:NLEH:av}, whereas \textbf{RAPEPA} underperforms due to poor AP-receiver matching. As $\SEk$ increases, more I-APs are assigned, reducing sum-HE. However, integrating \textbf{JAPPA} with a 5-layer SIM yields an extra 12~mW gain even at $\SEth=18$~bit/s/Hz. In contrast, \textbf{MADDPG} and \textbf{SADDPG} struggle beyond $\SEth \geq 18$ due to their sensitivity to penalty terms $\lambda_{\mathrm{SE}}$, which destabilize reward signals and hinder training.

Figure~\ref{fig:SE_v_M} illustrates the performance of the proposed solutions versus $M$ in terms of the minimum SE metric. Under a practical QoS requirement of $\SEk = 12$~bit/s/Hz, the DRL-based \textbf{MADDPG} and \textbf{SADDPG} demonstrate strong universal approximation capability, with slightly lower min-SE values compared to the \textbf{JAPPA}. This can be attributed to the DRL agents focusing smoothly on maximizing the sum-HE, as reward stability is enhanced by not violating the SE constraint $\lambda_{\mathrm{SE}}$ too frequently during training.
Nevertheless, \textbf{MADDPG} outperforms \textbf{SADDPG} with a significant SE gain of $15.38\%$, validating the effectiveness of the proposed CTDE framework. Although \textbf{JAPPA-HPS} incurs the highest computational cost, it achieves a $125\%$ improvement over \textbf{RAPEPA-EQPS}, which represents the lowest-complexity benchmark.

Figure~\ref{fig:HE_v_M} illustrates the WPT performance across various $M$ values. A slight drop in sum-HE appears for $M\geq40$ due to fewer antennas per AP, though this effect is minor compared to~\cite{Hua:WCNC:2024, Mohammadi:GC:2023} thanks to SIM-based EM wave beamforming. Performance gains from SIMs grow with more layers: at $M=30$, the five-layer \textbf{JAPPA-EQPS} outperforms the two-layer version by 35.9\%. The off-policy CTDE also performs well, closely tracking the convex benchmark and surpassing the centralized \textbf{SADDPG} by ~30.77\%. With a target $\SEth=12$ bit/s/Hz, DRL-based methods effectively manage penalty rewards, ensuring stable exploration and episodic optimization.

\section{Conclusion}~\label{sec:conclusion}
We proposed the CF-mMIMO SWIPT systems enhanced with SIMs, supported by extensive evaluation of various DRL-based algorithms. Closed-form expressions for the ergodic SE and average HE were derived to characterize the system performance. To address the high-dimensionality of the joint optimization problem and the non-convex nature of SWIPT formulations, we introduced a Markovian environment with pre-processing approaches and a normalized joint reward function to satisfy the non-convex constraints while enhance the distributed learning mechanisms.
A two-phase convex-based optimization scheme was also developed to serve as a traceability and robust benchmark for assessing DRL effectiveness. Through numerical evaluations, we investigated both the limitations of conventional centralized, whether heuristic, SCA- or DRL-based, optimization schemes and the advantages of the proposed CTDE framework in mitigating these drawbacks. Specifically, the DRL-based CTDE methods demonstrates superior ability to globally determine optimized sum-HE without relying on complex convex approximations post-training and effectively avoid sub-optimal convergence caused by the \textit{curse of dimensionality} inherent in fully centralized DRL-based optimization.

Future work will extend the proposed CTDE-MADRL framework to near-field SWIPT, where near-field propagation leads to strongly geometry-dependent channel responses \cite{Li_RHS_NearField_2024}. This extension primarily affects the channel modeling and the Markovian environment design, while the overall CTDE-MADRL algorithm remains applicable. In contrast, for full-duplex SWIPT, additional self-interference and cross-link interference terms arise in both UL and DL. As a result, the associated optimization becomes significantly more challenging, motivating the development of interference-aware problem formulations and enhanced learning-based solution designs. Another important direction for future research is to incorporate the effects of channel aging into the proposed system model and to further validate the learning behavior and robustness of the CTDE framework under such CSI mismatch.

\appendices
\vspace{-0.7em}
\section{Proof of~\eqref{eq:alphapzf:approx}}~\label{appendix:A1}
Given $\hatWi\!\!=\!\!\Ex\big\{\! \big( \Ghmu \big)^{\dag} \Ghmu \!\big\}$, where $\Ghmu \!\in\! \C^{N \times K_{\mathtt{I}}}\triangleq\big\{ \hgmki \big\}, \forall k_i \in \Kii$, with $\hgmki \in \C^{N \times 1}$ distributed as $\hgmki\sim\CN\big(\gmkibar, \Covhatgmki \big)$, we derive $\big[ \hatWi \big]_{k_i, k_i}$ as
\vspace{-0.2em}
\begin{align}
    \!\!=\! \Ex\big\{ \!
    \big\Vert \! \big( 
    \gmkibar \!\!+\! \tildehgmki 
    \big)& \! \big\Vert^{\!2} 
    \!\big\}
    \!\!\stackrel{(a)}{=}\!
    \Ex\big\{\! \big\Vert \gmkibar \big\Vert^{2} \! \big\}
    \!\!+\!\!
    \sqrt{\tau \snrul} \barbetamk \FmRIS^{\dag} \FmRIS \qA_{m ki}
    \nonumber\\
    &\!\!\stackrel{(b)}{\approx}\!
    \kappa \barbetamki (\zmkilos)^{\dag} \FmRIS \FmRIS^{\dag} \zmkiplos + \barbetamki,
\end{align}
where $\tildehgmki \sim \CN\big(0, \Covhatgmki \big)$ is the NLoS component of $\hgmki$, (a) exploits the independence property between the LoS and NLoS components, and (b) is obtained due to the negligible noise power $\Snn$ (in Watts) in the second term of the denominator of $\qA_{m k_i}$. Then, $\qD^{\mathtt{I}}_{m} \in \R^{K_{\mathtt{I}} \times K_{\mathtt{I}}}$ in \eqref{eq:alphapzf2:approx} is defined with the $(k_i,k_i)$-th element $\big[\qD^{\mathtt{I}}_{m} \big]_{k_i,k_i} \triangleq \barbetamki$.

\section{Proof of Proposition~\ref{Theorem:SE:PPZF}}~\label{appendix:B}
We note that the derivations for $\DSki$, $\BUki$ and $\EUIki$ in~\eqref{eq:SINE:general} follow a similar procedure as in~\cite{hua:icc:2025, Mohammadi:JSAC:2023}. Due to the consideration of PPZF precoding, the computation of \eqref{eq:IU_interference} adopts different approach. To this end, we obtain
\vspace{-0.4em}
\begin{align}~\label{eq:proofIUIb}
     &\sum\nolimits_{k_{i}'\in\Kii \setminus k_{i}}\! \!\Ex\big\{ \vert \IUIki \vert^2 \!\big\} 
    \\
    &=
    \sum\nolimits_{k'_i \in \Kii \setminus k_i} \Ex \big\{ \big\vert  \sum\nolimits_{m \in \M} \sqrt{\rho_d a_m \etamkpI}   (\hgmki)^\dag \! \wimkp^{\PZF} \big\vert^{2}  \big\} 
    \nonumber\\
    &\hspace{0em}
    +\sum\nolimits_{k'_i \in \Kii \setminus k_i} \Ex \big\{  \big\vert  \sum\nolimits_{m \in \M}  \sqrt{\rho_d a_m  \etamkpI}  (\tilgmki)^\dag \wimkp^{\PZF} \! \big\vert^{\!2}  \big\}
    \nonumber\\
    &\hspace{-0.2em} \stackrel{(a)}{=}
    \sum\nolimits_{k_{i}'\in \mathcal{P}_{k} \setminus k_i} \!\!
    \Big(
    \sum\nolimits_{m \in \M} \alphaPZFmkip
    \sqrt{a_m \snrdl \etamkpI } 
    \Big)^{2}
    \nonumber\\
    &\hspace{-0.2em} +
    \sum\nolimits_{k_{i}'\in\Kii \setminus k_i } \!\!
    \sum\nolimits_{m \in \M} \!
    a_m \snrdl \etamkpI \big( \barbetamki \trace(\FmRIS \FmRIS^{\dag} )\! -\! \gamma^{\mathtt{I}}_{m k_i} \big),\nonumber
\end{align}
where in (a), the first term follows from that the ZF suppresses interference towards all the IRs except for ones sharing the same pilot, while the second term exploits~\cite[Appendix B]{hua:icc:2025}. 

Then, we have 
\vspace{-0.5em}
\begin{align}~\label{eq:proofIUI} 
    \Ex\big\{ \vert \BUki \vert^{2} \!\big\}\!&=\!\!
    \sum\nolimits_{m \in \M} \!\!\!
    a_m \snrdl \etamkI 
    \!\Big(\! 
    \Ex \Big\{ \Big\vert \! \big(\hgmki \!\!+\!\! \tilgmki \big)^{\!\dag} \!\wimk^{\PZF} \Big\vert^{2} \! \Big\} 
    \nonumber\\
    &\hspace{4.5em}\!-
    \Ex\Big\{ \big(\hgmki \!\!+\! \tilgmki \big)^{\dag} \wimk^{\PZF} \big\} \Big\vert^{2} \Big\}
    \!\Big)
    \nonumber\\
    &\hspace{-2em}=\!\! \sum\nolimits_{m \in \M} \!
    a_m \snrdl \etamkI \big( \barbetamki \trace(\FmRIS \FmRIS^{\dag} )\! -\! \alphaPZFmki \big).
\end{align}

Subsequently, exploiting the orthogonality between $\gmkiu$ and normalized $\wemj^{\PMRT}$ due to PPZF precoding, we further obtain
\vspace{-0.3em}
\begin{align}~\label{eq:proofEUI}
    \sum\nolimits_{k_e\in \Kee} \!\!\Ex \{\vert  \EUIki \vert^2\}
    &=
    \sum\nolimits_{k_{e} \in\Kee } \!\!
    \sum\nolimits_{m \in \M} \!\!
    a_m \snrdl \etamjE \nonumber\\
    &\times\big( \barbetamki \trace(\FmRIS \FmRIS^{\dag} ) -\! \gamma^{\mathtt{I}}_{m k_i} \big).
\end{align}
To this end, by substituting the final results for,~\eqref{eq:proofIUIb}, \eqref{eq:proofIUI}, and~\eqref{eq:proofEUI} into~\eqref{eq:SINE:general}, we get the desired result in~\eqref{eq:numerical:SE:PPZF}. 

\vspace{-0.1em}
\section{Proof of Proposition~\ref{Theorem:RF:MRT}}~\label{appendix:C}
The proof of Proposition~2 applies a similar methodology as in Appendix \ref{appendix:B}, with the difference lying in the analysis of the expectation terms in \eqref{eq:El_average}.
Following the algebraic manipulation steps in~\cite[Appendix C]{hua:icc:2025}, we obtain the first term in~\eqref{eq:NLEH:av} as
\vspace{-0.2em}
\begin{align}~\label{eq:Exgmjwemj}
    &\Ex\big\{\!\big\vert\big( \gejue \big)^{\dag} \wemj^{\PMRT}\big\vert^2\!\big\}
    = \Ex\big\{\! \big\vert
    \big(
    \hgmki + \tilgmki
    \big)^{\dag}
    \wemj^{\PMRT}
    \big\vert^2\!\big\}
    \nonumber\\
    &=
    \big(\big[ \hatWe \big]_{k_e k_e} \big)^{-1}
    \Ex\big\{ 
        \big\vert
        \hgmjueH \qB_{m} \hgmjue
        \big\vert^{2}
    \big\}
    \nonumber\\
    &+
    \big( \alphaPMRTmke \big)^{-1}
    \Ex\big\{\trace\big( \hgmjue \hgmjueH \qB_{m}^{\dag} \gtilmjeu \gtilmjeuH \qB_{m} \big) \big\}
    \nonumber\\
    &\stackrel{(a)}{=}
    \big( \alphaPMRTmke \big)^{-1} 
    \Big[
    \bar{a}_{m, k_e}
    \!+\!
    \Big( 
    \kappa \barbetamke \trace\big(\FmRIS \FmRIS^{\dag} \zmkelos (\zmkelos)^{\dag}  \big) 
    \nonumber\\
    &\hspace{2em}+ 
    \gameumj
    \Big)\times
    \Big(
    \barbetamke \trace\big( \FmRIS \FmRIS^{\dag} \big) - \gameumj
    \Big)
    \Big],
\end{align}
where the second term in (a) exploits the independent property between $\hgmjue$, $\gtilmjeu$. 
Under the asymptotic approximation of $\Ex\{\qB_{m}\}$, for $\forall k'_e \in \mathcal{P}_{k} \!\setminus\! k_e$, we obtain
\vspace{-0.3em}
\begin{align}~\label{eq:Exgmjwemjp1}
    \Ex\big\{\!\big\vert\big( \gejue \big)^{\dag} \wemjp^{\PMRT}\big\vert^2\!\big\}
    &= \big( \alphaPMRTmke \big)^{\!-1} 
    \nonumber\\
    & \hspace{-8em}   
    \!\Big[
    \bar{b}_{m, k_e k'_e}
    \!+\!
    \Big( 
    \kappa \barbetamkep \trace\big(\FmRIS \FmRIS^{\dag} \zmkeplos (\zmkeplos)^{\dag}  \big)+ 
    \gameumj
    \Big)
    \nonumber\\
    &\hspace{-8em}
    \times
    \Big(
    \barbetamke \trace\big( \FmRIS \FmRIS^{\dag} \big) - \gameumj
    \Big)
    \Big],  
\end{align}
while, $\forall k'_e \notin \mathcal{P}_{k} \!\setminus\! k_e$, we have
\vspace{-0.1em}
 \begin{align}~\label{eq:Exgmjwemjp2}   
   \Ex\big\{\!\big\vert\big( \gejue \big)^{\dag} \wemjp^{\PMRT}\big\vert^2\!\big\}
     &\stackrel{(b)}{=}
    \big( \alphaPMRTmke \big)^{-1} 
    \nonumber\\
    &\hspace{-8em}
    \times
    \Big[
    \Big(
    \kappa \barbetamke \trace\big(\FmRIS \FmRIS^{\dag} \zmkelos (\zmkelos)^{\dag} \big)
    + \barbetamke \trace\big(\FmRIS \FmRIS^{\dag} \big)
    \Big)
    \nonumber\\
    &\hspace{-8em}
    \times\!
    \Big(\!
    \kappa \barbetamkep \trace\big(\FmRIS \FmRIS^{\dag} \zmkeplos (\zmkeplos)^{\dag}  \!\big)
    \!+ \!
    \gameumjp
    \Big)
    \Big],
\end{align}
where (b) exploits the independence between the orthogonal channels from AP $m$ to ER-$k_e$ and ER-$k'_e$ where $k_e k_e \notin \mathcal{P}_k$.

3) Given the statistical independence, between $\gmjue$ and normalized $\wimk^{\PZF}$ for ER $k_e$ and IR $k_i$, respectively, the third term in~\eqref{eq:NLEH:av} is obtained as
\vspace{-0.1em}
\begin{align}~\label{eq:thirdExp}
    \Ex\big\{ \big(\gejue\big)^{\!\dag} &\Ex\big\{\wimk^{\PZF} \wimk^{\PZF}\big\} \gejue \!\big\}
    = \frac{1}{N}
    \Big(
    \barbetamke \trace\big( \FmRIS \FmRIS^{\dag} \big)
    \nonumber\\
    &
    +
    \kappa \barbetamke \trace\big( \FmRIS \FmRIS^{\dag} \zmkelos (\zmkelos)^{\dag}  \big)
    \Big).
\end{align}
By substituting the final results for~\eqref{eq:Exgmjwemj},~\eqref{eq:Exgmjwemjp1}, ~\eqref{eq:Exgmjwemjp2},  and~\eqref{eq:thirdExp} into~\eqref{eq:El_average}, after some manipulations, we get the result in~\eqref{eq:El_average:PMRT}. 


\end{document}